\documentclass[11pt]{article}
\usepackage[letterpaper,margin=1in]{geometry}
\usepackage[T1]{fontenc}
\usepackage{lmodern}
\usepackage{amsmath,amssymb,amsthm,mathtools}
\usepackage{booktabs,tabularx,array}
\usepackage{microtype}
\usepackage{needspace}
\usepackage{tikz}
\usetikzlibrary{arrows.meta,positioning}
\usepackage[colorlinks=true,linkcolor=blue!45!black,citecolor=blue!45!black,urlcolor=blue!45!black]{hyperref}
\usepackage[nameinlink,noabbrev,capitalize]{cleveref}
\newtheorem{theorem}{Theorem}
\newtheorem{proposition}[theorem]{Proposition}
\newtheorem{lemma}[theorem]{Lemma}

\theoremstyle{definition}

\newcommand{\F}{\mathbb F}
\newcommand{\Z}{\mathbb Z}

\newcommand{\ket}[1]{\lvert #1\rangle}

\newcommand{\wt}{\operatorname{wt}}
\newcommand{\supp}{\operatorname{supp}}
\newcommand{\Tr}{\operatorname{Tr}}

\newcommand{\spanof}{\operatorname{span}}
\newcommand{\diag}{\operatorname{diag}}

\DeclareMathOperator{\End}{End}
\title{Constant Rate Codes with Fully Addressable Transversal T: Good Codes, Sparse Checks}
\author{Alexander Meiburg\footnote{transt@ohaithe.re}\\{\em University of Waterloo, Department of Applied Math}\\{\em Perimeter Institute for Theoretical Physics}}
\date{September 29 2026}
\begin{document}
\maketitle
\begin{abstract}
Recent work by Wills~\cite{Wills} constructed codes with constant rate and transversal $T$.
Extending this line of work, we construct asymptotically good CSS codes with
\textbf{fully addressable} transversal $T$, permitting selective control over
which logical qubits receive a $T$ gate.
We also give Pauli-LDPC codes with transversal $T$ and growing distance at
vanishing rate; and a protected-subsystem construction with fully addressable
transversal $T$, constant protected rate, distance $\Omega(\sqrt n)$, and sparse (non-Pauli) checks.
More generally, we obtain asymptotically good binary codes with full transversal
addressability for every fixed finite-order one-qubit gate, retaining CSS
realizations for dyadic phase gates such as $\sqrt T$ and using generally
nonadditive codes for rotations such as $R_z(\pi/7)$.
The remaining challenge is to construct asymptotically good Pauli-LDPC codes
with fully addressable transversal non-Clifford one-qubit gates.
\end{abstract}

\section{Introduction}\label{sec:intro}

Recent constructions of constant-rate codes with transversal $T$, including
Wills's family~\cite{Wills}, motivate studying logical addressability together
with rate, distance, and check sparsity. We give a direct construction of
asymptotically good CSS codes with fully addressable transversal $T$, and
extend this to other fixed rational-angle rotations on binary codes.
Our construction produces each logical selector directly at the target phase
level. We also give two complementary sparse constructions.

A transversal operation is a product of one-qubit physical unitaries, without
a physical permutation; the \emph{projective order} of $U$ is the least
$L>0$ for which $U^L$ is a scalar. For a fixed gate of order $L$, our strongest
addressability interface is
\begin{equation}
 \left(\bigotimes_{j=1}^n U^{\epsilon_j(a)}\right)V
 =V\left(\bigotimes_{i=1}^k U^{a_i}\right),
 \qquad a\in\Z_L^k,\quad \epsilon(a)\in\{0,1\}^n,
 \label{eq:literal}
\end{equation}
where $V$ and the logical basis are fixed. Thus a physical $I/U$ mask selects
any subset of logical qubits, and also realizes arbitrary logical powers.
We additionally require the all-site layer to implement the collective target,
\begin{equation}
 U^{\otimes n}V=VU^{\otimes k}.\label{eq:strong}
\end{equation}
Gate identities are understood up to a state-independent scalar. Throughout,
asymptotic goodness means $k,d=\Omega(n)$, and LDPC means bounded check support
\emph{and} bounded participation per qubit.

The gate need not be $T$: for each fixed rotation through a rational
multiple of $\pi$ about a Pauli axis, we obtain good codes with full
transversal addressability. The strongest gate-set guarantee is for
dyadic phases $R_\ell=\diag(1,e^{2\pi i/2^\ell})$, including $R_3=T$
and $R_4=\sqrt T$: these have CSS realizations that also support every
logical Pauli transversally. Other rational angles, such as $R_z(\pi/7)$,
use generally nonadditive codes, without the same guarantee for the
other logical gates. \Cref{thm:good} gives the precise statement;
\cref{sec:gate-scope} explains the compatible gate sets and their costs.

Our three constructions are as follows.
\begin{enumerate}
\item \textbf{Good addressable codes (\cref{sec:good}).}
Explicit good binary codes satisfy \eqref{eq:literal}--\eqref{eq:strong},
with polynomial-time ideal recovery. Dyadic phase gates admit positive CSS
realizations, meaning that the defining stabilizers all have eigenvalue $+1$. A further {\em finite-clock extension} get us every fixed finite-order one-qubit gate.
\item \textbf{Pauli-LDPC codes (\cref{sec:pauli}).}
A quaternion quotient of a three-dimensional color-code geometry gives
$k=2$ and $d=\Omega(n^{1/3})$, with bounded Pauli checks. A signed
transversal implements $T\otimes T^\dagger$; a change of
stabilizer signs gives uniform physical $T$ inducing $T\otimes T$.
This construction provides sparse Pauli checks and growing distance, but with vanishing rate, and without individual addressability.
\item \textbf{Sparse protected addressability (\cref{sec:protected}).}
We construct codes with sparse commuting checks, constant protected rate, distance $\Omega(\sqrt n)$, and fully addressable transversal T on the protected subsystem.

A modular product and a fixed binary embedding give $K=\Theta(n)$ protected
qubits, distance $\Omega(\sqrt n)$, and sparse commuting projectors.
Each protected logical $T$ is individually addressable. The transversal operation also applies a known Clifford gate to the gauge qubits, without affecting the intended protected action. The binary projectors are generally non-Pauli.
\end{enumerate}

\Cref{sec:prior} gives the relevant prior constructions and the ingredients
used in each approach. \Cref{sec:future} identifies the remaining intersections
of these properties.

\section{Prior Work}\label{sec:prior}

\paragraph{Transversal phases and coding parameters.}
Divisibility and overlap constraints give binary stabilizer codes with
transversal gates throughout the dyadic hierarchy~\cite{Haah,HH}.
Wills~\cite{Wills} obtains explicit constant-rate codes with collective
transversal $T$ and growing subpolynomial distance by puncturing decreasing
monomial codes. This provides a concrete starting point for the stronger
parameter and control requirements considered here.

A distance-two code satisfying \eqref{eq:strong} already restricts the
gate's projective order. The Eastin--Knill connected-group
argument~\cite{EK} supplies the infinite-order obstruction. For rotations
about a Pauli axis, the diagonal stabilizer classification imposes the dyadic
restriction~\cite{AJO}. General diagonal-logical-operator formalisms
\cite{WQB,RK} describe the corresponding phase constraints, while
Guyot and Jaques~\cite{GJ} study limitations on CSS addressability.

\paragraph{Good codes and logical addressability.}
Algebraic-geometric methods give asymptotically good quantum codes with
transversal non-Clifford gates~\cite{GG,Nguyen}, and addressable
controlled-phase constructions~\cite{HVWZ}. Tensor-network constructions
also realize addressable non-Clifford gates~\cite{CaoLackey}.

Finite transversal groups beyond stabilizer constructions occur in
nonadditive codes. In particular, Kubischta and Teixeira~\cite{KT23}
construct distance-three codes encoding one qubit with transversal
binary-icosahedral symmetry. These illustrate the low-order exceptions
to the Pauli-axis restriction discussed in \cref{sec:gate-scope}; our
finite-clock extension instead gives good parameters and addressability
for a prescribed cyclic group.

\paragraph{Sparse constructions and operational consequences.}
Three-dimensional color codes supply Pauli-LDPC realizations of transversal
$T$ with a cubic volume cost. The unfolding theorem of Kubica, Yoshida, and
Pastawski~\cite{KYP} relates globally colored codes to toric codes and will
identify the trivial logical sector on our sphere cover. The quaternion
construction uses a quotient whose colors are permuted around loops, and
extracts its exact logical phase from an equivariant degree.

Classical good LDPC codes~\cite{Gallager} and the hypergraph-product
construction~\cite{TZ} provide the sparse ingredients for
\cref{sec:protected}. We work over $\Z_8$, prove the required free-module
and distance statements directly, and then encode each clock into a fixed
binary cell. Keeping the full clock's logical register is what permits a
sparse checked space with a protected subsystem.

The good codes also support efficient recovery and magic-state distillation.
Polynomial-time AG decoding without an additional genus
penalty~\cite{BeelenMontanucci} gives the needed linear correction radius.
Constant-overhead magic-state distillation is established in
\cite{WHY}. Our fixed-level formulation uses native
$R_\ell\ket+$ resources and ideal Clifford operations; for higher levels,
a capped injection procedure replaces perfect lower-level non-Clifford
corrections.

\paragraph{Ingredients used by the three approaches.}
\Cref{sec:good} uses the Garcia--Stichtenoth towers~\cite{GS}, their effective
AG-code constructions~\cite{Hess,Shum}, binary alphabet reduction, odd-weight
phase completion, and AG decoding~\cite{BeelenMontanucci}; its
finite-order extension uses clock CSS codes and a local charge encoding.
\Cref{sec:pauli} uses the three-dimensional color-code check complex and
the unfolding theorem~\cite{KYP}, together with the quaternion action and
degree calculation given there. \Cref{sec:protected} combines classical
LDPC input~\cite{Gallager}, the product architecture of~\cite{TZ}, and the
charge encoding from \cref{sec:good}, retaining the logical clock degrees of
freedom as a protected register and a gauge register.

\paragraph{Concurrent work.}
After the three main constructions here had been developed, but before
manuscript preparation was complete, the concurrent work of
San-Jos\'e~\cite{SanJose} appeared, with some overlapping results.
Both works obtain asymptotically good binary CSS codes with correction-free
transversal dyadic gates and addressability, as well as direct
constant-overhead $T$-state distillation; the binary dictionary and
odd-weight completion also overlap with his Proposition~4.5 and Lemma~6.2.
Our addressability proof constructs selectors by interpolation at reserved
evaluation points directly at the target phase level, whereas his
Remark~6.8 descends from a collective gate one level higher using a
logical-Pauli commutator.
Avoiding that higher-level construction, together with our four-bit boundary
packing, improves the explicit rate and relative-distance guarantees in the
comparison of \cref{app:parameters}, which uses the generic inner-map bound
for his hierarchy-lowering route.
Beyond this common CSS setting, we extend addressability to every fixed
finite-order one-qubit gate through modular clock codes, and give the
quaternion Pauli-LDPC and sparse protected constructions in
\cref{sec:pauli,sec:protected}, using geometric and ring-product methods
outside the scope of that work.
Our higher-level distillation protocol also uses only native noisy resource
states and ideal Clifford operations, without assuming ideal lower-level
non-Clifford corrections.

\section{Good codes with fully addressable finite-order gates}\label{sec:good}

We first construct good CSS codes for dyadic phase gates, with transversal
representatives of every logical Pauli. We then extend addressability to
any fixed finite-order one-qubit gate using a more general binary code.
This extension guarantees the chosen gate and its powers, without
automatically retaining transversal logical Paulis. We begin by stating
these guarantees and explaining which gates can coexist on one code.

\subsection{Supported gates, costs, and compatibility}\label{sec:gate-scope}

\begin{theorem}[Good addressable codes]\label{thm:good}
For every fixed $\ell\ge2$, there is a polynomial-time explicit family of
positive binary CSS codes with $k,d=\Omega(n)$ satisfying
\eqref{eq:literal}--\eqref{eq:strong} for
$R_\ell=\diag(1,e^{2\pi i/2^\ell})$.
Every logical Pauli has a transversal physical Pauli representative.

More generally, every fixed nonscalar one-qubit unitary $U$ of finite
projective order admits such a family of binary subspace codes, with
explicitness relative to its fixed eigenbasis. Both families have
polynomial-time computable address masks and polynomial-time ideal recovery
correcting a linear number of errors. All asymptotic constants may depend
on the fixed gate. In the general case, the asserted transversal gate set
consists of the independently addressed powers of $U$.

Conversely, a code of distance at least two with $k\ge1$ satisfying \eqref{eq:strong}
forces $U$ to have finite projective order. For a Pauli-axis rotation in an
aligned logical Pauli frame, an ordinary binary stabilizer realization also
requires that order to be a power of two.
\end{theorem}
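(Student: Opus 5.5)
The proof has three parts: the dyadic CSS family, the finite-order extension, and the converse.

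\textbf{Dyadic CSS family.} Work over $\F_{2^m}$ for a fixed $m$ and take a Garcia--Stichtenoth tower. Following the effective constructions of \cite{Hess,Shum}, this gives AG codes of length $N$, with linear dimension and linear designed distance, whose $\ell$-fold componentwise products are still codes of linear distance. Reserve $k=\Theta(N)$ rational points $Q_1,\dots,Q_k$ as logical evaluation points.

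The $X$-type code $C$ is chosen so that each codeword is determined by its values at the $Q_i$ (the logical bits) together with a gauge part. Concatenating with a fixed binary dictionary $\F_{2^m}\to\F_2^{b}$ gives binary codes.

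The phase condition \eqref{eq:strong} at level $\ell$ reduces to a divisibility statement. Expand the weight $\wt(x)$ of a physical codeword as an integer combination of products of at most $\ell$ coordinates, with coefficients $2^{j-1}$ up to signs. We then need
\[
\wt(x)\equiv \textstyle\sum_i a_i \pmod{2^\ell},
\]
and this follows if every $j$-fold product of codewords of the punctured code has weight divisible by $2^{\ell-j+1}$. Achieve this in two steps: choose the AG divisor small enough that $j$-fold products land in a code dual to the reserved structure, and use odd-weight phase completion (padding bits) to fix the residual constant terms.

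The selectors $\epsilon(a)$ come from interpolation. Pick a function in a Riemann--Roch space that equals $a_i$ at $Q_i$ and is controlled elsewhere. Its binary image is a mask whose phase polynomial, evaluated on codewords, reproduces $\sum_i a_i x_i \bmod 2^\ell$. Because all of this happens at the target level, powers $a\in\Z_{2^\ell}^k$ are realized directly. Any such function can be computed in polynomial time by linear algebra.

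Transversal Paulis are automatic for a CSS code: logical $X$ and $Z$ representatives are codewords of $C$ and of $C_2^\perp$. Distance is linear because both $C$ and the dual-of-gauge code have designed distance $\Omega(N)$, and the binary inner map costs only a constant factor. Recovery uses AG decoding \cite{BeelenMontanucci} on the outer code after decoding the inner blocks.

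\textbf{General finite-order $U$.} Diagonalize $U$ in its fixed eigenbasis as $\diag(e^{2\pi i r_j/L})$, with $L$ the projective order.
1. Build clock CSS codes over $\Z_L$. The construction is the same AG construction over a field of characteristic coprime to, or dividing, the relevant factors; combine prime-power parts via the Chinese remainder theorem.
2. Encode each clock site into a fixed binary cell using a local charge encoding, so that physical $U^{\otimes\text{cell}}$ acts as the clock phase. Distance and rate change only by constants.
3. Addressability again follows by interpolation, now with $\Z_L$ coefficients.

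\textbf{Converse.} If $U^{\otimes n}V=VU^{\otimes k}$ with $d\ge2$, then $U^t$ for every $t$ preserves the code, so the closure of $\{U^t\}$ also acts as a logical transversal group. If $U$ had infinite projective order, this closure would contain a connected one-parameter subgroup $e^{iHs}$. Eastin--Knill \cite{EK} then forces the generator $H$ to act trivially on the code: single-site generators are undetectable at $d\ge2$. Yet $H$ must act as a nontrivial logical generator, a contradiction.

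For stabilizer realizations of Pauli-axis rotations, invoke the diagonal classification of \cite{AJO}. It shows that transversal diagonal logical gates lie in the Clifford hierarchy, which forces order $2^\ell$.

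\textbf{Main obstacle.} I expect the hardest step to be the simultaneous divisibility for all $j\le\ell$ while keeping the interpolating masks binary and within $\{0,1\}^n$. The $\Z_{2^\ell}$-lifting of the binary dictionary interacts nonlinearly with carries. My first attempt would be to choose the dictionary to be a self-orthogonal, level-$\ell$ divisible binary code (as in \cite{Haah}), which controls these carries cellwise.
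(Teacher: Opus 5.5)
Your converse argument is the paper's. The construction, however, has a genuine gap at its central step: you never obtain an exact modulo-$2^\ell$ phase identity with binary masks.

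\textbf{Why your route does not give the phase identity.} As written, your sufficient condition requires divisibility of $j$-fold overlaps by $2^{\ell-j+1}$ for all $j$, including $j=1$. That makes $R_\ell^{\otimes n}$ act as the identity. The single-row terms must instead reproduce the logical bits, as in \eqref{eq:phase-target}--\eqref{eq:ward}, and you do not say how padding achieves this without disturbing the higher overlaps.

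Choosing the divisor small does give a field-level relation between retained and reserved values; the paper's condition $Ea<N-t$ is of this kind. But such a relation does not control binary overlap counts of the concatenated code, for three reasons:
\begin{itemize}
\item coordinatewise products of inner images are not images of field products;
\item binary overlaps are idempotent, whereas $f\cdot f=f^2$;
\item read through a trace, a field identity gives only parities, never residues modulo $4,8,\ldots$.
\end{itemize}
Your fallback, a divisible inner dictionary that controls carries cellwise, fails structurally. It makes a uniform layer on each cell act trivially, but the logical bits are global values at the reserved points, so their phases must come from an outer identity. In addition, a mask is not the binary image of an interpolating function. Values in $\Z_{2^\ell}$ also cannot be prescribed to $\F_{2^m}$-valued functions, so your ``interpolation with $\Z_{2^\ell}$ coefficients'' for arbitrary powers does not make sense as stated.

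\textbf{What the paper supplies.}
\begin{itemize}
\item With $s=2^{\ell-1}$, it uses the Frobenius form $P_s$. This form is $\F_2$-multilinear of field degree $E=2^{s-1}$ (so $8$ for $T$, not $\ell$) and depends only on the set of distinct arguments.
\item The subset dictionary of parities $\lambda_J$, $|J|\le s$, turns $\Tr(wP_s)$ into a sum of binary $s$-fold overlaps.
\item Interpolation on $\mathcal L(EG)$ gives the parity-level selector \eqref{eq:selector}. Its mask is built from the interpolation weights $w_{iP}$.
\item Odd-weight completion, applied to the selected columns plus a formal column $e_{r+i}$, cancels one carry at a time through $D^{*2^{\ell-2}},\ldots,D$. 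It yields weights $\gamma^{(i)}\in\Z_{2^\ell}^{n_0}$ satisfying \eqref{eq:exact-selector}; this is where the carries are handled. Arbitrary powers are then integer combinations of the $\gamma^{(i)}$.
\item Literal $I/R_\ell$ masks, together with the uniform collective layer, come from repeating coordinate $j$ some $r_j\in[L-1,2L-2]$ times with $r_j\equiv\sum_i\gamma^{(i)}_j\pmod L$.
\end{itemize}

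\textbf{The clock extension is also underspecified.}
\begin{itemize}
\item AG codes do not exist over $\Z_{p^e}$. The paper instead Hensel-lifts prime-field CSS codes that have distance on their full kernels, which keeps the logical quotient free and the distance intact.
\item Clock-level addressability needs no interpolation, because logical $Z_L$ already has physical representatives $\Lambda_i$.
\item The binary cell needs binary-digit subcells plus a compensating subcell, so that every power and the collective layer are both realizable by $I/U$ masks.
\item Each logical clock must be restricted to $\spanof\{\ket0,\ket1\}$, which is why that family is nonadditive.
\end{itemize}
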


For a Pauli $P\in\{X,Y,Z\}$, write $R_P(\theta)=e^{-i\theta P/2}$.
Its projective order is finite exactly when $\theta/\pi$ is rational.
The dyadic case has $\theta=2\pi a/2^\ell$; after cancelling common
factors, $a$ is odd. Powers of $R_\ell$ give every such $Z$ rotation.
A Clifford change of basis gives the corresponding $X$ or $Y$ rotation
on a Clifford-equivalent stabilizer code, preserving the full logical
Pauli gate set. Each fixed target has its own code family; the theorem
does not put all rational rotations on one code.

For the CSS families, an addressed phase pattern takes one layer of at
most $n$ physical gates, each either $I$ or $R_\ell$. An arbitrary logical
Pauli likewise takes one layer of physical Paulis. The rate and relative
distance are positive constants; \eqref{eq:constants} gives explicit
values for $T$. These constants can deteriorate substantially as the
gate order increases. Throughout, the polynomial-time guarantees for
construction and recovery hold for a fixed gate order as blocklength
grows. The clock extension has the same asymptotic guarantees for the
chosen phase gate, but generally loses the stabilizer structure and the
guaranteed transversal logical Paulis.

\begin{proposition}[Pauli compatibility]\label{prop:pauli-compatibility}
Suppose a binary subspace code of distance at least two has transversal
implementations of $X$, $Z$, and $U$ on one logical qubit, each acting as
identity on the remaining logical qubits. Then the projective group
$\langle X,Z,U\rangle$ is finite. If $U$ has projective order greater than
five, its rotation axis is one of the three Pauli axes.
\end{proposition}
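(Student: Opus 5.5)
The plan is to follow the Eastin--Knill pattern used for the converse in \cref{thm:good}, and then apply the classification of finite subgroups of $PU(2)\cong SO(3)$.

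\textbf{Finiteness.} Let $G\subseteq U(2)^{\otimes n}$ be the group of transversal unitaries $\bigotimes_j u_j$ that preserve the code space $\mathcal C$. Because preserving $\mathcal C$ is a closed condition, $G$ is a closed subgroup of a compact Lie group, hence itself a compact Lie group with finitely many connected components. Its Lie algebra consists of sums $\sum_j h_j$ of one-site Hermitian operators. Distance at least two gives the Knill--Laflamme condition $P h_j P=c_j P$ for every one-site operator, so each Lie-algebra element acts as a scalar on $\mathcal C$. Consequently the identity component $G_0$ acts on $\mathcal C$ by phases, and the logical action $G\to PU(\mathcal C)$ factors through the finite group $G/G_0$.

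The three given implementations lie in $G$, so the subgroup they generate also has finite image. Every element of that subgroup acts as $g\otimes I$ on the logical tensor factorization. Restricting to the chosen logical qubit therefore gives a well-defined homomorphism into $PU(2)$ with finite image. That image is exactly the projective group $\langle X,Z,U\rangle$, which is therefore finite. I expect this step to be the main technical point. The care needed is to show that $G_0$ is genuinely trivial projectively: the one-parameter subgroups $e^{it\sum h_j}$ act as $e^{itc}$ on $\mathcal C$, and $G_0$ is generated by such exponentials.

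\textbf{Axis.} Under $PU(2)\cong SO(3)$, the finite subgroups are cyclic $C_m$, dihedral $D_m$, $A_4$, $S_4$ and $A_5$. Projective order equals rotation order, and the latter three groups contain no element of order greater than $5$. So if $U$ has projective order greater than $5$, the group $\langle X,Z,U\rangle$ is cyclic or dihedral.

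It cannot be cyclic, because it contains the Klein four-group $\{I,X,Y,Z\}$, which is not cyclic. Hence the group is $D_m$ with $m\ge 6$. In $D_m$, every element outside the rotation subgroup $C_m$ is a half-turn about a perpendicular axis and has order $2$. Therefore $U$ lies in $C_m$, and its axis is the principal axis of $D_m$. This axis is unique because $m>2$.

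Next I locate the Pauli Klein group inside $D_m$. It has three nonidentity commuting half-turns about mutually orthogonal axes. They cannot all be perpendicular flips, since the product of two flips is a nontrivial rotation about the principal axis. So one of $X,Y,Z$ is the principal-axis half-turn, and its axis is the principal axis. It follows that $U$ rotates about that Pauli axis.

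The remaining checks are bookkeeping. First, identify the projective order of $U$ with its order in $SO(3)$. Second, note that $m$ must be even for this half-turn to exist, which is consistent with the argument rather than an additional hypothesis.
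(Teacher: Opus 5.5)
Your proposal is correct and follows the paper's proof essentially step for step. Finiteness comes from the Eastin--Knill compact-Lie-group argument (the identity component acts by phases, and there are finitely many components), and the classification of finite subgroups of $SO(3)$ then forces a dihedral group whose principal axis carries both $U$ and one Pauli half-turn. Your way of placing a Pauli on that axis, by noting that two perpendicular flips multiply to a principal-axis rotation, is a slightly more careful form of the paper's remark that three mutually orthogonal axes cannot all lie in the perpendicular plane.
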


\begin{proof}
The code-preserving transversal unitaries form a compact Lie group.
Distance at least two implies that every one-site Hamiltonian has scalar
compression to the code. A continuous evolution within this group has
a generator that is a sum of such Hamiltonians, so it acts on the code only
by an overall phase. The identity component of the group has trivial
projective logical action. A compact Lie group has finitely many
components, so the projective logical image is finite, as in the
Eastin--Knill argument~\cite{EK}.

Projective one-qubit unitaries act as rotations of the Bloch sphere.
A finite rotation group is cyclic, dihedral, tetrahedral,
octahedral, or icosahedral. The last three have no element of order greater
than five, and a cyclic group cannot contain the projective Pauli group.
The group must therefore be dihedral. Its rotations of order greater than
two share a principal axis; its other nonidentity rotations are half-turns
about perpendicular axes. The three mutually orthogonal Pauli half-turn
axes cannot all lie in that perpendicular plane, so one is the principal
axis, which is also the axis of $U$.
\end{proof}

Thus a rotation by $\pi/4$ or $\pi/8$ must be about a Pauli axis if
logical $X$ and $Z$ are also to remain transversal in the same frame.
The relevant restriction is whether the gates generate a finite group.
Irrational axis coordinates alone do not rule this out. For example,
an icosahedron with vertices
$(0,\pm1,\pm\varphi)$, $(\pm1,\pm\varphi,0)$, and
$(\pm\varphi,0,\pm1)$, where $\varphi=(1+\sqrt5)/2$, admits all three
Pauli half-turns and an order-five rotation about $(0,1,\varphi)$.
Known nonadditive distance-three codes realize transversal
binary-icosahedral symmetry~\cite{KT23}. The proposition gives a necessary
finite-group condition; it does not establish the existence of good codes
supporting each compatible group.

The general construction in \cref{thm:good} still realizes any fixed
rational-angle rotation about any axis, with constant rate, linear
distance, and full cyclic addressability. The logical frame matters:
changing the axis by a basis conjugation also conjugates the other
logical gates. It therefore does not add the new rotation to a gate set
containing the original Paulis. In particular, a non-Pauli-axis rotation
of order greater than five cannot coexist with transversal logical $X$
and $Z$ on any code of distance at least two.

For $R_z(\pi/7)$ the axis obstruction does not apply, but the dyadic
stabilizer restriction does: our generally nonadditive clock construction
provides the cyclic gate family without establishing all logical Paulis.
For the $T$ families, the transversal gates do not include the full
one-qubit Clifford group: $\langle X,Z,T\rangle$ is finite, whereas
adding $H$ makes it infinite.
An irrational multiple of $\pi$ as the rotation \emph{angle}, in contrast
to an irrational axis coordinate, has infinite projective order and is
excluded already by \cref{thm:good}.

The order restrictions in \cref{thm:good} follow from the results reviewed in
\cref{sec:prior}. We now prove the construction statements. The main task
is to choose physical phases that depend on just one logical bit and are
constant throughout its CSS coset. We obtain this control by reserving
evaluation points of an algebraic-geometric code as logical coordinates.
Interpolation expresses each reserved value using the remaining, physical
coordinates. A fixed binary encoding turns these interpolation identities
into overlap identities, and a modular correction turns them into exact
phase identities. For $T$, these two steps use fourfold overlaps and
arithmetic modulo eight. Finally, a constant amount of repetition replaces
physical gate powers by literal $I/T$ masks, including the all-site layer.
The same method works at every fixed dyadic level. The later clock
construction handles general finite order.

\subsection{Exact addressability as a phase identity}

Use the CSS convention $A\subseteq C\subseteq\F_2^{n_0}$, with stabilizers
$X(A)$ and $Z(C^\perp)$ and logical labels in $C/A$. Choose representatives
$g_1,\ldots,g_k$ of a quotient basis. The logical basis state labelled by
$b\in\F_2^k$ is the uniform superposition of the strings
$a+\sum_u b_ug_u$, as $a$ ranges over $A$. A physical $T$-power layer
with exponents $p\in\Z_8^{n_0}$ gives a string $x$ the phase
$e^{\pi i\sum_jp_jx_j/4}$. To implement $T$ on logical qubit $i$, this
must equal $e^{\pi ib_i/4}$ on every string in the superposition.
Equivalently, we require
\begin{equation}
 \sum_jp_j\left(a+\sum_u b_ug_u\right)_j=b_i\pmod8,
 \qquad a\in A,\quad b\in\F_2^k.\label{eq:phase-target}
\end{equation}
The sum inside parentheses is binary; its resulting bits are then read
as integers $0$ or $1$ in the weighted sum. Requiring the identity for
every $a$ makes the phase constant throughout each CSS coset, preserving
the logical basis state and giving it the desired phase.

Put $r=\dim A$, and let the rows $h_u$ consist of a basis of $A$ followed
by the $g_i$, so logical qubit $i$ has row index $r+i$. Expand binary
addition as an integer polynomial, starting from
$x\oplus y=x+y-2xy$. Matching the coefficients of the row-selection
bits in \eqref{eq:phase-target} gives the weighted overlap conditions
\begin{align}
 \sum_jp_jh_{u,j}&=\delta_{u,r+i}\pmod8,\nonumber\\
 \sum_jp_jh_{u,j}h_{v,j}&=0\pmod4\quad(u<v),\label{eq:ward}\\
 \sum_jp_jh_{u,j}h_{v,j}h_{w,j}&=0\pmod2\quad(u<v<w).\nonumber
\end{align}
Terms involving four or more distinct rows have coefficients divisible
by eight, so they impose no further conditions here. At level $\ell$,
the analogous conditions involve $j$-fold overlaps modulo
$2^{\ell-j+1}$ for $j\le\ell$.

Rather than solve these modular conditions directly, we first construct
a binary overlap identity that isolates the desired logical row. We
then choose physical weights that lift this parity identity to the
required phase identity. For $T$, it suffices to control fourfold
overlaps: this makes the span of pairwise codeword products
self-orthogonal and lets us correct the next two binary digits of the
weighted sum. At general level $\ell$, we use overlap order
$2^{\ell-1}$ as a sufficient condition for the same procedure.

\subsection{Logical labels at reserved evaluation points}

The classical input is an evaluation code: a function is encoded by its
values at many points. Rational functions on an algebraic curve play the
role that low-degree polynomials play in Reed--Solomon codes.
Let $\mathcal L(G)$ be the space of rational functions on a genus-$g$ curve
with no poles except at $P_\infty$, of order at most $a$; write $G=aP_\infty$.
Take $N$ other rational evaluation points over a fixed square field $\F_q$.
We use the following evaluation-code properties. A nonzero function has
at most $a$ zeros, and $\dim\mathcal L(G)=a-g+1$ when $a>2g-2$.
Values at $t$ points can be prescribed independently when $a-t>2g-2$.
The evaluation code obtained by requiring the function to vanish at
those $t$ points has dual designed distance $a-t-2g+2$ on the remaining
points. Finally, pole degrees add under multiplication. These facts
supply the interpolation and distance estimates below; the genus gives
the penalties relative to ordinary polynomial evaluation.

Reserve $t$ evaluation points $\Gamma$ for logical labels and retain the
other $N-t$ as physical symbols. Restrict the value at each reserved point
to $\F_2u=\{0,u\}$ for a nonzero $u\in\F_q$, so each reserved value
specifies one bit. Let $\mathcal U$ be the resulting binary-linear
function space and $\mathcal U_0$ its subspace vanishing on $\Gamma$.
Under the interpolation condition $a-t>2g-2$, the reserved bits are
independently assignable, giving
\begin{equation}
 \mathcal U/\mathcal U_0\cong\F_2^t.\label{eq:boundary}
\end{equation}
Two functions have the same reserved bits exactly when their difference
lies in $\mathcal U_0$. This is the quotient that will label the logical
basis. The reserved points themselves are omitted from the physical
code; interpolation on the retained points will provide access to each
of their values.

For $T$, we want a binary mask whose overlap with any four codewords has
parity equal to the product of their chosen logical bits. Higher levels
use the corresponding higher-order overlap. To obtain it from field
interpolation, we need a field-valued expression that behaves like a
binary overlap: it must be multilinear over $\F_2$ and handle repeated
inputs consistently. Its field degree must also be bounded, so we can
apply the evaluation-code estimates. Define the gate order $L$, overlap
order $s$, and field degree $E$ by
\begin{equation}
 L=2^\ell,\qquad s=2^{\ell-1},\qquad E=2^{s-1},\qquad
 P_s(x_1,\ldots,x_s)=[z^E]\prod_{j=1}^s
       \left(\sum_{r\ge0}x_j^{2^r}z^{2^r}\right).\label{eq:frob}
\end{equation}
Here $[z^E]$ extracts the coefficient of $z^E$; only finitely many terms
of the series can contribute. Each map $x\mapsto x^{2^r}$ is linear
over $\F_2$, so $P_s$ is binary multilinear. It is also symmetric and
has total field degree $E$.

The choice of coefficient ensures that the value depends only on the
set of distinct arguments, not on their multiplicities. To check this,
write $F_x$ for the series in \eqref{eq:frob}. In characteristic two,
$F_x^2+F_x=xz$. Repeatedly using this identity to eliminate repeated
factors leaves the product of the distinct factors plus correction terms $z^u$ times
$h$ factors, with $2u+h\le s$. A product of $h$ Frobenius-series terms
can contribute only at an exponent expressible as a sum of $h$ powers of two.
The correction term could contribute at degree $E$ only if $E-u$ were
a sum of $h$ powers of two. This is impossible: for $0<u<E$, the number
of ones in the binary expansion of $E-u$ is at least $s-u>h$.
In particular,
\begin{equation}
 P_s(x,\ldots,x)=x^E,\qquad
 P_s(x,\ldots,x,y)=x^{E/2}y^{E/2}.\label{eq:frob-repeat}
\end{equation}
For $T$, the selector uses $s=4$ and $E=8$.

We next encode each field element into bits while preserving the
multilinear expression above. Choose a binary basis of $\F_q$, with
$q=2^m$, and use the field trace
$\Tr(x)=\sum_{r=0}^{m-1}x^{2^r}\in\F_2$. For every nonempty
$J\subseteq[m]$ of size at most $s$, let $\lambda_J$ be the parity of
the coordinates indexed by $J$. Define the fixed inner map
\begin{equation}
 \Psi(x)=(\lambda_J(x))_J\in\F_2^B,
 \qquad B=\sum_{j=1}^{\min(m,s)}\binom mj.\label{eq:dictionary}
\end{equation}
The singleton parities retain the original binary coordinates, so $\Psi$
is injective. For any $w\in\F_q$, the tensor $\Tr(wP_s)$ is a binary
sum of $\lambda_J^{\otimes s}$. Equivalently, its value on
$x_1,\ldots,x_s$ is a sum modulo two of products
$\lambda_J(x_1)\cdots\lambda_J(x_s)$, which are overlaps of the
encoded bits. To find the coefficients, evaluate on tuples of basis
vectors. Each entry depends only on the set $I$ of distinct basis
indices, and $\lambda_J^{\otimes s}$ contributes one exactly when
$I\subseteq J$. Solving from the largest subsets downward gives a
triangular linear system. This is the subset dictionary also used
in~\cite[Proposition~4.5]{SanJose}.

Choose the reserved nonzero value $u$ so that $\Tr(u^E)=1$, and suppose
\begin{equation}
 a-t>2g-2,\qquad Ea<N-t.\label{eq:ag-conditions}
\end{equation}
The second inequality makes evaluation on the retained points injective
for $\mathcal L(EG)$: a nonzero function in this space cannot vanish
at all $N-t$ points. Its retained values therefore determine its value
at any reserved point, by a linear formula
\begin{equation}
 h(P_i)=\sum_{P\notin\Gamma}w_{iP}h(P),
 \qquad h\in\mathcal L(EG).\label{eq:interpolation}
\end{equation}
For $f_1,\ldots,f_s\in\mathcal U$, the function
$P_s(f_1,\ldots,f_s)$ lies in $\mathcal L(EG)$ because its total field
degree is $E$. Apply \eqref{eq:interpolation} and take the trace.
At the reserved point, each input is either zero or $u$; the normalization
$\Tr(u^E)=1$ makes the result exactly the product of the $i$th logical
bits. On the retained points, \eqref{eq:dictionary} expresses the trace
as a binary overlap. This gives a physical mask for that one logical
coordinate.

Let $A\subseteq C$ be the images of $\mathcal U_0\subseteq\mathcal U$
under retained evaluation and $\Psi$, and let $v_j$ be the columns of a
generator matrix $G_C$, with the $r$ stabilizer rows first. For logical qubit $i$
we have thus constructed a mask $y^{(i)}$ satisfying
\begin{equation}
 \sum_j y^{(i)}_j v_j^{\otimes s}=e_{r+i}^{\otimes s}.\label{eq:selector}
\end{equation}
An entry of this tensor identity is a statement about the overlap of
$s$ generator rows on the selected physical coordinates. Its parity is
one only when every row is the $i$th logical row. Any overlap involving
a stabilizer row or a different logical row has even parity. This is
the required isolation of logical qubit $i$.

\subsection{From overlap selectors to exact transversal gates}

The selector identity controls parities, while the gate requires an
identity modulo $L$. The next lemma supplies the physical weights that
make this conversion possible. For a binary code $D$, write $D^{*r}$
for the span of coordinatewise products of $r$ words. These spaces let
us track the overlaps needed at each successive power of two.

\begin{lemma}[Odd-weight completion; cf.~{\cite[Lemma~6.2]{SanJose}}]
\label{lem:completion}
If all $2^{\ell-1}$-fold overlaps of $D$ are even, odd coefficients
$c_j\in\Z_{2^\ell}$ can be computed such that
$\sum_jc_jx_j=0\pmod{2^\ell}$ for every $x\in D$.
\end{lemma}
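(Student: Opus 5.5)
The plan is to build the coefficients one binary digit at a time. At each stage we keep a stronger invariant that covers not only codewords of $D$ but also their coordinatewise products. For $r\ge1$ write $W_r=\sum_{r'\le r}D^{*r'}$ for the binary span of all products of at most $r$ codewords. Put $s=2^{\ell-1}$. For $m=1,\ldots,\ell$ we want odd integer coefficients $c^{(m)}_j$ satisfying
\[
 \sum_j c^{(m)}_j z_j\equiv 0 \pmod{2^m}\qquad\text{for all } z\in W_{2^{\ell-m}},
\]
where $z$ is a binary vector and its bits are read as integers. The case $m=\ell$ is the lemma, since $D\subseteq W_1$.

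\emph{Base case $m=1$.} Take $c^{(1)}_j=1$. The hypothesis says every product of exactly $s$ codewords has even weight. Repeating factors, every product of at most $s$ codewords also has even weight. Weight modulo two is $\F_2$-linear, so every element of $W_{s}=W_{2^{\ell-1}}$ has even weight.

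\emph{Inductive step $m\to m+1$.} Let $c=c^{(m)}$ and $r=2^{\ell-m-1}$. For $z\in W_r$ define $f(z)=2^{-m}\sum_jc_jz_j \bmod 2$; this is well defined because $W_r\subseteq W_{2r}$ and the invariant holds there.

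The key point is that $f$ is $\F_2$-linear on $W_r$. For binary $x,y$ we have $x\oplus y=x+y-2xy$, so
\[
 \sum_jc_j(x\oplus y)_j=\sum_jc_jx_j+\sum_jc_jy_j-2\sum_jc_jx_jy_j .
\]
If $x,y\in W_r$, then $xy$ lies in $W_{2r}=W_{2^{\ell-m}}$, since products of spans are spanned by products of at most $2r$ codewords. The invariant gives $\sum_jc_jx_jy_j\equiv0\pmod{2^m}$, so the last term vanishes modulo $2^{m+1}$. Hence $f(x\oplus y)=f(x)+f(y)$.

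We then extend $f$ to a linear functional on $\F_2^{n}$ and write it as $z\mapsto e\cdot z$ with $e\in\F_2^n$. This is a linear solve over $\F_2$ from a basis of $W_r$, so it takes polynomial time for fixed $\ell$. Set $c^{(m+1)}=c+2^me$. Its entries stay odd because $m\ge1$. For $z\in W_r$, the integer $2^m\sum_je_jz_j$ is congruent modulo $2^{m+1}$ to $2^m(e\cdot z)$, with $e\cdot z$ taken mod 2. Therefore
\[
 \sum_jc^{(m+1)}_jz_j\equiv 2^m\bigl(f(z)+e\cdot z\bigr)\equiv0\pmod{2^{m+1}},
\]
which is the invariant at level $m+1$.

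The main obstacle, and the reason for the strengthened invariant, is the additivity of the correction functional $f$. A naive induction on $D$ alone fails because the cross term $2\sum_jc_jx_jy_j$ must already be controlled modulo $2^{m+1}$. Controlling it at each level requires control of products of twice as many words one level lower. Halving the product order at each of the $\ell-1$ lifting steps is exactly what the hypothesis on $2^{\ell-1}$-fold overlaps pays for.
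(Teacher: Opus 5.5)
Your proof is correct and follows the paper's own argument. You start from all-ones coefficients and cancel one binary carry at a time by extending an $\F_2$-linear functional, descending through $D^{*2^{\ell-2}},\ldots,D$; your $W_r$ coincides with $D^{*r}$ because binary coordinatewise products are idempotent. Linearity comes from $x\oplus y=x+y-2xy$ together with the invariant one level up, and your explicit invariant simply makes precise the general-$\ell$ induction that the paper only sketches.
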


For $T$, the lemma corrects two binary digits of the weighted sum.
Let $H=D^{*2}$. Fourfold evenness makes $H$ self-orthogonal, so
$x\mapsto\wt(x)/2\pmod2$ is linear on $H$. Extend this functional to
the ambient space as $z_1\cdot x$. Choosing coefficients $1+2z_1$
cancels that binary digit, making the weighted sum divisible by four
on $H$. To reach divisibility by eight on $D$, consider the remaining
digit $x\mapsto\sum_j(1+2z_{1,j})x_j/4\pmod2$. It is linear on $D$
because products of two words of $D$ lie in $H$, where divisibility by
four has already been established. Extend this second functional as
$z_2\cdot x$ and add $4z_2$ to the coefficients. At general level
$\ell$, the same argument descends through
$D^{*2^{\ell-2}},D^{*2^{\ell-3}},\ldots,D$, cancelling one carry at a time.

To apply the lemma to a logical selector, retain the columns selected by
$y^{(i)}$ and append one formal column $e_{r+i}$. By
\eqref{eq:selector}, adding this column makes the $s$-fold tensor sum
zero over $\F_2$. Thus all $s$-fold overlaps of the augmented row space
$D$ are even, as the lemma requires. It assigns an odd coefficient to
every column, including the formal one. That coefficient is invertible
modulo $L$, so we can rescale all coefficients to make it $-1$.
The formal column evaluates to the logical bit $b_i$, giving
\begin{equation}
 \sum_j\gamma^{(i)}_j(vG_C)_j=b_i\pmod L,
 \qquad v=(a,b)\in\F_2^{\dim A}\times\F_2^k.\label{eq:exact-selector}
\end{equation}
This holds for every stabilizer coefficient vector $a$ and logical label
$b$, so it is the exact phase identity needed for $R_\ell$ on logical
qubit $i$. Adding the vectors $\gamma^{(i)}$, with the desired logical
exponents as coefficients, implements any logical exponent pattern.
The formal column has served only to specify the target phase and is
not included as a physical qubit.

A constant amount of repetition replaces these physical powers by
literal $I/R_\ell$ masks. Set
$\gamma(\mathbf1)=\sum_i\gamma^{(i)}$ and repeat coordinate $j$ exactly
$r_j$ times, where
\begin{equation}
 L-1\le r_j\le2L-2,\qquad
 r_j\equiv\gamma_j(\mathbf1)\pmod L.\label{eq:repetition}
\end{equation}
Applying $R_\ell$ to $b$ copies of a repeated coordinate has the same
effect as applying $R_\ell^b$ to the original coordinate. Since
$r_j\ge L-1$, every exponent $0\le b<L$ is available. Applying the
gate to all $r_j$ copies gives exponent $\gamma_j(\mathbf1)$, so the
all-site layer implements the collective logical target. This repetition
is a CSS encoding enforced by local $ZZ$ equality checks. It preserves
$k$, does not reduce $d_Z$, and multiplies $d_X$ by at least $L-1$.

\subsection{Parameters, distance, and boundary packing}

Before repetition, $n_0=B(N-t)$, $k=t$, and
\begin{equation}
 d_X\ge N-t-a,\qquad d_Z\ge a-t-2g+2.\label{eq:ag-distance}
\end{equation}
For the $X$ bound, a nonzero outer function has at most $a$ zeros, so it
is nonzero at at least $N-t-a$ retained points. Injectivity of $\Psi$
ensures that each such point contributes a nonzero binary block.
For the $Z$ bound, a binary $Z$ pattern in one block defines a linear
functional on its field symbol. Represent that functional by a trace-dual
field element and do this for every block. Commutation with the $X$
stabilizers puts the resulting word in the field dual of the evaluation
code for $\mathcal U_0$, since $\mathcal U_0$ is field-linear. A nonzero
word in this dual has weight at least $a-t-2g+2$. If the compressed
word is zero, the original $Z$ pattern is a stabilizer. Compression
cannot increase support, proving the quantum distance bound even though
the inner code itself has short $Z$ stabilizers.

Choose a Garcia--Stichtenoth tower, a sequence of curves over one fixed
square field, with
$N/g\to\mathcal A=\sqrt q-1>2E$, and take
\begin{equation}
 \frac aN\longrightarrow\frac{1+2/\mathcal A}{E+1},\qquad
 \frac tN\longrightarrow\tau=
 \frac{1-2E/\mathcal A}{2(E+1)}>0.\label{eq:ag-parameters}
\end{equation}
These choices leave a margin asymptotic to $\tau N$ in each inequality
of \eqref{eq:ag-conditions}, and both distances in
\eqref{eq:ag-distance} grow linearly. The repetition overhead gives
$n\le(2L-2)B(N-t)$, while $k=t$ is also linear in $N$. This proves
constant rate and linear distance. The required AG computations are
effective by~\cite{Hess,Shum}. For fixed $\ell$, the field and inner
encoding have constant size; finding the selectors and correcting the
modular weights takes polynomial time in blocklength. The order-two
case follows directly from logical $Z$ representatives of good CSS codes
and the same repetition step.

The constants improve if several logical bits share one reserved point.
Replace the line $\F_2u$ by a binary subspace $W\subset\F_q$, with
trace functionals that isolate each coordinate bit in the overlap
identity. For $T$, a four-dimensional $W\subset\F_{1024}$ works with
$B=385$, so each reserved point carries four logical bits.
Here $\mathcal A=31$, $a/N\to22/186$, and $t/N\to5/186$. Packing gives
$k=4t$, pre-repetition length $n_0=385(N-t)$, and final length $n\le14n_0$.
Thus $n_0/k\to13937/4$ and $\liminf d/n_0\ge1/13937$, giving
\begin{equation}
 \liminf\frac{k}{n}\ge\frac4{195118},\qquad
 \liminf\frac d n\ge\frac1{195118}.\label{eq:constants}
\end{equation}
Packing improves the constants; the one-bit boundary already proves the
asymptotic theorem. A concrete example is given in the footnote.\footnote{Use
$\F_{1024}=\F_2[\theta]/(\theta^{10}+\theta^3+1)$, the ordered basis
$(1,603,111,518)$ of $W$, and trace weights $(453,267,799,356)$, writing
field elements as integers whose binary digits are polynomial coefficients.
The $i$th trace weight applied to $P_4(w_a,w_b,w_c,w_d)$ is one exactly
when $a=b=c=d=i$. The $4^5$ basis contractions verify the identity by
multilinearity.}

This construction has an intrinsic sparsity limitation. Every nonzero
$X$ stabilizer comes from a nonzero outer function and therefore has
linear weight. Since $\dim A=\Theta(n)$, changing the generating basis
cannot give bounded-weight $X$ checks. Thus these good CSS codes do not
become LDPC codes merely by choosing different stabilizer generators.

\subsection{Finite clocks and arbitrary rotation axes}\label{sec:clocks}

For a gate of general projective order $L$, choose an eigenbasis and remove
its overall phase to write $U=\diag(1,\zeta^u)$, with
$\zeta=e^{2\pi i/L}$ and $\gcd(u,L)=1$. We first build a good code for
finite-dimensional systems whose basis labels are residues modulo an
integer, which we call clocks. Their logical phase operators will supply
addressability, and a fixed local encoding will turn those phases into
physical one-qubit gates.

A $Q$-level clock has shift $X_Q\ket{x}=\ket{x+1\bmod Q}$ and phase
$Z_Q\ket{x}=e^{2\pi ix/Q}\ket{x}$. For matrices $A,B$ over $\Z_Q$ with
$AB^T=0$, take $X_Q$ checks from the rows of $A$ and $Z_Q$ checks from
the rows of $B$. The orthogonality condition makes them commute.
Computational labels satisfying the $Z_Q$ checks lie in $\ker B$, and
the $X_Q$ checks shift them by $\operatorname{row}(A)$. The encoded basis
therefore consists of uniform sums over the corresponding cosets.

For a prime-power clock dimension $Q=p^e$, begin with full-row-rank
prime-field CSS matrices $A_0,B_0$ satisfying $A_0B_0^T=0$, positive
quantum rate, and linear distance in both \emph{full} kernels
$\ker A_0$ and $\ker B_0$. Thus every nonzero vector in either kernel
has linear weight, including vectors representing stabilizers. This is
stronger than a quantum distance bound on logical classes alone, and
will let us preserve distance when lifting the matrices modulo $p^e$.
Such input codes can be obtained as follows. On a tower over a fixed square
$\F_{p^m}$ with $N/g\to\mathcal A>6$, take nested evaluation codes
$V_0\subset V_1$ with pole degrees $a_0\sim N/3$ and $a_1\sim2N/3$.
Let the $X$ checks generate $V_0$ and the $Z$ checks generate $V_1^\perp$.
The logical dimension is $a_1-a_0\sim N/3$, while the two full kernels
$V_0^\perp,V_1$ have distances at least $a_0-2g+2$ and $N-a_1$.
Expanding symbols in trace-dual bases gives the required prime-field
matrices, with constant length factor $m$ and the same lower bounds.

Choose any lift $A$ of $A_0$ modulo $p^e$. We adjust a lift of $B_0$
one base-$p$ digit at a time to make it commute with $A$. If $B^{(r)}$
already commutes modulo $p^r$, then $AB^{(r)T}/p^r\pmod p$ is the
next commutation error. Cancel it by solving
\begin{equation}
 A_0\Delta^T=-AB^{(r)T}/p^r\pmod p,
 \qquad B^{(r+1)}=B^{(r)}+p^r\Delta.\label{eq:hensel}
\end{equation}
A right inverse of $A_0$ supplies a solution. Adding $p^r\Delta$ leaves
all previously corrected digits unchanged and makes the checks commute
modulo $p^{r+1}$.

Minors that are invertible modulo $p$ remain invertible modulo $p^e$,
so the lifted matrices retain their ranks and their check modules have
bases over $\Z_{p^e}$. These minors also split the check maps. The
independent reductions of the $X$ rows extend to a basis of $\ker B$,
so the logical quotient $\ker B/\operatorname{row}(A)$ is free as well:
its coordinates range independently over all $p^e$ clock labels.
For distance, suppose a lifted kernel contained a short nonzero vector.
Divide out the largest common power of $p$ in its coordinates and reduce
modulo $p$. This produces a nonzero vector in the original kernel on no
more sites, contradicting its distance bound. Thus both distances remain
linear. The corrected matrices can be dense; sparsity is not required
for this construction.

For general $L$, factor it into prime powers and combine the corresponding
clocks by the Chinese remainder theorem. To match their logical dimensions,
choose suitable tower levels and fix any surplus logical coordinates.
The bounded ratio of successive tower lengths preserves goodness. The number
of prime factors and the lifting depth are constants for the fixed gate.

We now encode each clock into qubits. The binary encoding must allow
every clock phase power to be selected by an $I/U$ mask, while also
giving a prescribed phase when $U$ is applied everywhere. A unary string
$u_x=1^x0^{L-1-x}$ has exactly $x$ ones, so uniform $U$ on that string
gives phase $\zeta^{ux}$. Use several such strings to make the desired
phase coefficients available. Set $r=\lceil\log_2L\rceil$ and, for
$c\in\Z_L$, define
\begin{equation}
 E_c\ket{x}=
 \bigotimes_{h=0}^{r-1}\ket{u_{[2^hx]_L}}
 \otimes\ket{u_{[(c-(2^r-1))x]_L}}.\label{eq:chargecell}
\end{equation}
Here $[y]_L$ denotes the residue in $\{0,\ldots,L-1\}$. The first
subcell distinguishes all input labels, so $E_c$ is injective. The first
$r$ subcells supply phase coefficients $1,2,4,\ldots,2^{r-1}$. To
realize coefficient $b=\sum_hb_h2^h$, apply $U$ to every qubit of
subcell $h$ when $b_h=1$, leaving the last subcell untouched. The phase
is then $\zeta^{ubx}$, as required. Applying $U$ to all subcells instead
adds the coefficients to $c$, giving phase $\zeta^{ucx}$. Thus the final
subcell sets the collective action without removing any selectable
power. The cell uses $(r+1)(L-1)$ qubits.
For a prime-power component, first multiply its label by the CRT
idempotent: the residue that is one modulo that prime power and zero
modulo the other prime-power factors.

Choose a physical clock-$Z$ exponent vector $\Lambda_i$ representing
the phase $Z_L$ on logical clock $i$. Set
$c_j=\sum_i\Lambda_{ij}$ for physical cell $j$, with arithmetic modulo
$L$. To implement a logical exponent pattern $a$, use the digit mask
for $b_j=\sum_i a_i\Lambda_{ij}$ in that cell. The resulting logical
action is $\bigotimes_i Z_{L,i}^{ua_i}$. Applying $U$ everywhere gives
the collective target because the cell coefficients are the sum of all
logical representatives. The masks may be dense, but they add no checks.
Finally, restrict each logical clock to
$\spanof\{\ket0,\ket1\}$. The phase $Z_L^u$ on this sector is exactly
the desired qubit gate $\diag(1,\zeta^u)$. This restriction generally
produces a nonadditive binary code.

An error on $w$ binary qubits touches at most $w$ cells. Taking its matrix
elements through the cell isometries gives an operator on at most those
clock sites. Thus the outer distance also bounds the binary distance,
even if the error takes a cell outside its image. A scalar compression
on the full outer code remains scalar on any chosen logical subspace,
so restricting to two states per logical clock preserves this detection
bound. With fixed local overhead, both rate and relative distance remain
positive constants.

For a nondiagonal $U=W\diag(1,\zeta^u)W^\dagger$, use
$V_U=W^{\otimes n}V_{\rm diag}(W^\dagger)^{\otimes k}$.
This returns the gate identity to the original physical and logical
bases. The physical basis change is a product of one-qubit unitaries,
so it preserves distance and transversality.
For $R_z(\pi/7)$, the odd factor in projective order fourteen calls for
the clock route. The dyadic restriction excludes an aligned binary
Pauli-stabilizer implementation of distance at least two.

For completeness, infinite projective order is excluded already by
\eqref{eq:strong}. Projectively, the powers of $U$ are dense in a circle.
By continuity, the gate identity extends to this one-parameter group.
Differentiating it equates the compressed physical generator with the
logical generator. The physical generator is a sum of one-site
Hamiltonians, whose compressions are scalar at distance at least two;
the logical generator is nonscalar. This contradiction is the relevant
Eastin--Knill argument~\cite{EK}.

\subsection{Recovery and gate use}

Recovery can be performed on the larger outer code before returning to
the chosen logical subspace. For the dyadic construction, this parent
is a field-valued CSS code with the two classical decoding distances in
\eqref{eq:ag-distance}. The inner map $\Psi$ and repetition are
binary-linear isometries, so they have Clifford implementations.
Undoing them within each block turns an error on $e$ binary sites into
an error on at most $e$ outer field symbols. AG decoding
through half the designed distance, without an extra genus penalty
\cite{BeelenMontanucci}, gives polynomial-time recovery of every error on
\begin{equation}
 \left\lfloor\frac{\min\{N-t-a,\ a-t-2g+2\}-1}{2}\right\rfloor
 \label{eq:recovery-radius}
\end{equation}
binary sites. Since this recovery corrects the full parent code, it also
preserves every state and superposition in the selected logical sector.

For the clock construction, decode the error one base-$p$ digit at a
time. A lifted check matrix $A$ over $\Z_{p^e}$ gives syndrome
$s=A\boldsymbol e$ for an error vector $\boldsymbol e$. Modulo $p$,
this is the syndrome of its lowest digit under $A_0$, so the original
classical decoder recovers that digit
$\boldsymbol e_0$. Subtract $A\boldsymbol e_0$ from $s$ and divide by $p$
to expose the next digit's syndrome. Repeat for the fixed number of
digits and for both CSS error types. Subtracting a recovered digit and
dividing by $p$ introduces no errors at previously unaffected sites.
Thus every residual digit stays within the original support, and the
prime-field decoder's correction radius applies at every stage. Clock
Weyl operators span all errors on those sites, so the resulting recovery
also corrects arbitrary errors on that support.
For each binary cell, use a fixed local decoding channel that inverts
$E_c$ on its image and outputs a fixed clock state on the orthogonal
complement. An error on $w$ qubits then becomes an error channel on at
most $w$ clocks. Follow these local channels by outer recovery and
re-encoding. This gives polynomial-time recovery of the full clock parent,
preserving the restricted logical sector and all its coherences.

The dyadic codes also distill noisy resource states for their target gate.
With ideal stabilizer operations, $O(n)$ independent sufficiently accurate
copies of $R_\ell\ket+$ produce $\Theta(n)$ outputs whose joint trace
distance from the ideal product is $e^{-\Omega(n)}$. Prepare encoded
$\ket+^{\otimes k}$ with Clifford operations, inject the physical phase
layer using the noisy resources, recover the code, and decode. Without
faults the exact transversal identity gives the desired product of logical
resource states. At $T$, use the
weighted code before repetition: an odd physical exponent consumes one
raw $T$ state, with Clifford corrections, while even exponents require
only Clifford gates. To put the noise in a form the code can correct,
randomly apply either $I$ or the Clifford $TXT^\dagger$ to each raw
resource. This converts it into a mixture of $T\ket+$ and $ZT\ket+$
without changing its infidelity. Injection then produces independent
stochastic phase faults.
The four-bit packing yields an asymptotic raw-state bound $13937/4$ per
output and sufficient raw infidelity $p<1/27874$ under this independent
noise model.

For higher levels, an injection using the target gate's resource state
gives either $R_\ell$ or $R_\ell^{-1}$. Track the accumulated exponent
and repeat until it agrees with the target modulo $2^{\ell-2}$.
The remaining difference can be corrected by a power of $S$.
Because the level is fixed, a constant limit on the number of attempts
can make the timeout probability a sufficiently small constant.
Treat timeouts and resource imperfections as local channel faults.
Expanding the independent channels according to their faulty sites,
exact recovery corrects every term within its correction radius. The
remaining binomial tail is exponentially small at sufficiently low fault
rate. This uses only resources for the target gate and ideal Clifford
operations, with constant raw-state overhead; it does not assert a
threshold for noisy encoding, recovery, or other Clifford circuits.

Finally, bitwise CNOT between identical CSS blocks implements CNOT on
corresponding logical positions. We can therefore compute XOR parities
of those logical bits, apply addressed phases, and uncompute the
parities. The identity
\begin{equation}
 \sum_{\emptyset\ne J\subseteq[r]}(-1)^{|J|+1}
       \bigoplus_{i\in J}x_i=2^{r-1}\prod_{i=1}^r x_i.
 \label{eq:controlled}
\end{equation}
expresses a controlled phase in terms of these parity phases. For
$r=\ell=3$, this gives selected CCZ gates between corresponding logical
positions of three blocks.

\section{Pauli-LDPC codes from quaternion color holonomy}\label{sec:pauli}

In our second construction, we get sparse Pauli checks by changing the source
of the logical phase. Instead of selecting logical coordinates by
interpolation, we arrange that transporting four local colors around a
closed loop can permute them. This \emph{color holonomy} creates two logical
qubits on a closed three-dimensional space and makes their signed physical
weight an integer degree. The resulting phase is precisely a coupled $T$ gate, specifically a logical $T \otimes T^\dagger$.
The downside here is vanishing rate: refining it further increases distance but leaves the
number of logical qubits fixed.

\begin{theorem}[Quaternion Pauli-LDPC family]\label{thm:quaternion}
For every odd integer $M\ge3$, there is an explicit positive CSS code with
\begin{equation}
 n=2M^3,\qquad k=2,\qquad
 d_X,d_Z\ge\left\lceil\frac{\pi M}{16}\right\rceil.
 \label{eq:quaternion-parameters}
\end{equation}
Its displayed $X$ and $Z$ checks have weights at most $24$ and $6$,
respectively. Every qubit participates in four displayed $X$ checks and six
displayed $Z$ checks. There are signs $\sigma_t\in\{1,-1\}$ and a fixed
logical Pauli basis in which
\begin{equation}
 U_\sigma V=V(T\otimes T^\dagger),\qquad
 U_\sigma=\bigotimes_t T^{\sigma_t}.
 \label{eq:quaternion-gate}
\end{equation}
Changing stabilizer eigenvalue signs gives an affine CSS code with the
same parameters on which uniform physical $T$ implements logical
$T\otimes T$, up to a scalar.
\end{theorem}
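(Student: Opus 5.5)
We will build the code as a quotient of the three-dimensional color code. Start from a four-colorable cellulation of the $3$-sphere: for instance the tiling of $S^3\subset\mathbb H$ by the barycentric-type subdivision whose four vertex classes are indexed by the cosets of the quaternion group $Q_8$ acting on the unit quaternions. Take a lattice refinement of size $M$ in each direction, so that there are $16M^3$ tetrahedra upstairs. Then divide by a free action of $Q_8$ by left multiplication. This leaves $n=2M^3$ tetrahedra, which are the qubits, on the quotient $S^3/Q_8$.

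The action is chosen so that the generators $i,j$ permute the four colors by the two commuting double transpositions. The quotient is therefore only locally four-colored, with holonomy in the Klein four-group $\Z_2^2$. Checks are the usual color-code ones. The $X$ checks are vertex stars, each a $3$-cell of at most $24$ tetrahedra. The $Z$ checks are edge links, which are faces of weight $\le6$. Each tetrahedron has four vertices and six edges, which gives the stated participation numbers. Commutation is local, so it is inherited from the cover.

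Next, compute the logical space. By the unfolding theorem~\cite{KYP}, the color code on the $S^3$ cover is equivalent to three toric codes. Since $H_1(S^3)=0$, that cover has trivial logical sector. The quotient logicals therefore come from the twisted first homology of the holonomy local system, $H_1(S^3/Q_8;\F_2^{\text{colors}})$. Since $H_1(S^3/Q_8)=\Z_2^2$ and the color permutation is nontrivial on each generator, we expect a $2$-dimensional invariant. We would compute this via the Lyndon--Hochschild--Serre spectral sequence or directly with a transfer argument, giving $k=2$.

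The distance bound comes from lifting. A nontrivial logical operator lifts to a $Q_8$-equivariant chain on $S^3$ that fails to bound equivariantly. Projecting any of its cycles gives a noncontractible loop or a membrane separating cosets. Such an object must cross a fundamental domain, whose round-metric injectivity radius is $\pi/4$ in $S^3/Q_8$. We then convert metric length to combinatorial length using the maximal edge length of the refined tiling, $O(1/M)$ with explicit constant $4$. This would give $\lceil \pi M/16\rceil$ for both $X$ and $Z$.

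For the gate, we use the triorthogonality of the color code. The transversal $T$-type phase on a codeword $x$ is $\sum_t\sigma_tx_t \bmod 8$, where $\sigma_t$ is the sign of the tetrahedron's orientation relative to the coloring. On the cover, the standard bipartite-sign argument shows that $\sum\sigma_t x_t\equiv0\pmod 8$ on stabilizer cosets. On the quotient, the signed weight of a logical representative is an equivariant degree of the induced map to the coloring simplex. That degree takes values $\pm1$ on the two logical generators with opposite signs, because of the orientation character of the holonomy. Pairwise and triple overlap terms vanish by the overlap conditions \eqref{eq:ward}, which are checked locally. The result is $T\otimes T^\dagger$.

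Finally, flipping the signs of the $Z$ stabilizers on the negatively signed tetrahedra gives an affine code. Conjugation by $X$ on those sites maps $T^{-1}$ to $T$ up to phase, so uniform $T$ realizes $T\otimes T$ up to a scalar. The main obstacle is the degree computation showing the logical phase is exactly $\pm1 \pmod 8$ rather than only odd. We would attack it by computing explicit representatives on a single fundamental domain and checking them against the $Q_8$ character table.
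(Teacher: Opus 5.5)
Your outline has the same architecture as the paper: a $Q_8$ quotient of a four-colored $S^3$ with Klein-four color holonomy, unfolding upstairs, displacement at least $\pi/2$ for distance, an equivariant degree for the phase, and $X$-conjugation for the affine code. However, the step that actually produces $T\otimes T^\dagger$ is missing.

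\textbf{The phase computation.} A map to ``the coloring simplex'' has no useful degree. The simplex is contractible, the resulting signed count is just $\sum_t\sigma_t=0$, and the codeword never enters. The paper instead writes the lift of a logical word as the parity $x_t=\sum_{v\in t}p(v)$ of a vertex potential. It then sends $v\mapsto(-1)^{p(v)}e_{c(v)}$ into the boundary of the four-dimensional cross-polytope. Each tetrahedron contributes $\sigma_t(-1)^{x_t}$, so $\sum_{\mathrm{up}}\sigma_t(-1)^{x_t}=16\deg F$, i.e.\ $f(x)=-\deg F$ exactly. An exact identity is essential here. Your mod-$8$ statement on the cover is vacuous downstairs, because an invariant lift has eight times the downstairs signed weight.

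Your local overlap checks do dispose of every term of \eqref{eq:ward} that contains a stabilizer row, so they can give constancy of $f\bmod 8$ on cosets. The remaining data are global: $f$ on the logical representatives, and their mutual pairwise term modulo $4$. That pairwise term does not vanish in general. In the cocycle coordinates of \eqref{eq:quaternion-cocycles}, $f(a,b)=a-(a\oplus b)$ gives $f(1,1)=1\neq f(1,0)+f(0,1)=-1$. It vanishes only after the CNOT change $r=a$, $s=a\oplus b$, and that basis is an output of the global computation, not something a local check can supply.

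The paper obtains these values by comparing $F$ with explicit intertwiners of $\rho_z$: two constant maps to fixed points, $I$, and $\mathcal C I$, of degrees $0,0,1,-1$. It uses the fact that two equivariant maps with the same target action have degrees congruent modulo $8$. The domain action is free and the target is $2$-connected, so the maps can be made to agree off the $3$-cells, and the degree differences come in orbits of eight. This is exactly what settles your ``$\pm1$ rather than merely odd'' worry; a character-table computation does not. The opposite signs also do not come from an orientation character: $Q_8$ acts orientation-preservingly by even color permutations.

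\textbf{Remaining steps.} These are asserted rather than proved.
\begin{itemize}
\item \emph{Distance.} ``Fails to bound equivariantly'' gives no weight bound, and for membrane-type logicals there is no loop length to convert. You need that the support graph of any nontrivial logical (supported qubits plus adjacent checks of the opposite type) contains a closed walk with nontrivial monodromy. Otherwise each component lifts to a single copy that still meets every upstairs check evenly, since the Tanner graphs are covers for $M\ge3$. That copy is an upstairs stabilizer by unfolding, and it projects to a downstairs stabilizer. A simple such cycle with $m$ qubits has $2m$ incidences of length at most $4/M$, so $\pi/2\le 8m/M$ for both Pauli types at once.
\item \emph{The count $k=2$.} Your heuristic is sensitive to the coefficient module. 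The four-dimensional color permutation module is induced from $\{\pm1\}$, so Shapiro's lemma gives a one-dimensional answer. The potential description instead identifies logical words with $H^1(Q_8;W)$, where $W$ is the even-weight ambiguity of potentials. The explicit cocycles \eqref{eq:quaternion-cocycles} then give $\F_2^2$.
\item \emph{Explicitness.} The construction as written is not explicit. You need a $Q_8$-invariant subdivision and a coloring on which $i,j$ act by $\mathrm{xor}\,1$ and $\mathrm{xor}\,2$. The paper's coloring $c(v)=M\sum_r r|v_r|\bmod 4$ on the cross-polytope does this precisely because $M$ is odd, a hypothesis your argument never uses.
\item \emph{Uniform $T$.} You must also relabel the second logical qubit, $V'=DV(I\otimes X)$. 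Conjugating by $D$ alone still yields $T\otimes T^\dagger$.
\end{itemize}
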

The proof follows below. \cref{sec:ldpcConstruction} gives the basic construction, and \cref{sec:dbound} proves the required distance for \cref{eq:quaternion-parameters}. \cref{sec:ldpcPhaseGate} gives the $T$ construction for \cref{eq:quaternion-gate}. Finally, \cref{sec:uniformT} shows that we can get uniform physical $T$ to implement logical $T \otimes T$.

\subsection{Construction: Local Checks on a Twisted Coloring}\label{sec:ldpcConstruction}
Identify the unit three-sphere with unit quaternions. The eight elements
$Q_8=\{\pm1,\pm i,\pm j,\pm k\}$ act by left multiplication, without
fixed points. We build a colored triangulation upstairs and identify cells
related by this action. Start with the boundary of the four-dimensional
cross-polytope of some radius $M$, which we can describe as $|w| + |x| + |y| + |z| = M$. Its sixteen facets correspond to choices of signs on the
axes $(1,i,j,k)$. On each radius-$M$ facet, write its nonnegative coordinates
as $a_0+\cdots+a_3=M$. The cumulative coordinates
$(a_3,a_2+a_3,a_1+a_2+a_3)$ identify it with
$0\le x_1\le x_2\le x_3\le M$. Subdivide into lattice tetrahedra by the
unit-coordinate paths: their vertices are $x$, $x+e_{\pi(1)}$,
$x+e_{\pi(1)}+e_{\pi(2)}$, and $x+(1,1,1)$, for permutations $\pi$,
retaining the tetrahedra in the chamber. The barycentric coordinate
permutations induced by $i,j$ preserve this subdivision. The subdivisions
agree on common faces and are preserved by $Q_8$; each facet contains
$M^3$ tetrahedra.

At a lattice vertex $v$, assign the color
\[
 c(v)=M\sum_{r=0}^3 r|v_r|\pmod4.
\]
Each tetrahedron has all four colors. Identifying colors with two-bit
labels, multiplication by $i$ permutes them by $c\mapsto c\mathbin{\mathrm{xor}}1$,
and multiplication by $j$ by $c\mapsto c\mathbin{\mathrm{xor}}2$.
Thus a global coloring need not survive the quotient, although each local
neighborhood retains the color-code incidence rules.

Place a qubit on each tetrahedron orbit, an $X$ check on each vertex star,
and a $Z$ check on each edge star. There are $16M^3/8$ qubits. Counting the
unit-coordinate paths incident on a vertex gives star weights $8,16,24$;
edge stars have weights $4,6$. The four vertices and six edges of a
tetrahedron give the participation bounds. The checks commute: a vertex
and edge star overlap either in an even link cycle, in the two tetrahedra
sharing a triangle, or not at all. Define $\sigma_t$ as the determinant
sign of the four vertices in (some fixed) color order. Quaternion multiplication preserves orientation
and permutes colors evenly, so this sign descends to the quotient.

\par
We want to count the logical qubits encoded in this structure.
The globally colored sphere code encodes no logical qubits: color-code unfolding maps it to three toric codes on the sphere, each of which encodes none~\cite{KYP}. We use unfolding
only upstairs, where the coloring is global. Consequently, every upstairs
binary word satisfying the $Z$ checks is the tetrahedron parity of a
vertex potential $p$, namely $x_t=\sum_{v\in t}p(v)$ over $\F_2$.
Two potentials give the same word precisely when their difference assigns
one value to each color, with even total parity. In color order $(0,1,2,3)$, denote this ambiguity by
\[
 W=\{w\in\F_2^4:\textstyle\sum_c w_c=0\}
   =\spanof\{u=(1,1,0,0),\ v=(1,0,1,0),\ h=(1,1,1,1)\}.
\]

A quotient word lifts to an invariant upstairs word, but its potential
can change under $g\in Q_8$ by an element $z_g\in W$. Consistency requires
$z_{gq}=z_g+g z_q$: these are the elementary cocycle equations. Changing
the potential by a color assignment changes $z_g$ by $gw-w$; invariant
potentials give quotient $X$ stabilizers. Conversely, freeness on vertices
lets us choose a potential on one representative of each orbit and extend
it using any such cocycle. Thus logical words are exactly cocycles modulo
these changes of potential.

The color permutations obey
$(i-1)u=(j-1)v=0$ and $(i-1)v=(j-1)u=h$.
Substitution into the quaternion relations gives
\begin{equation}
 z_i=au+bv+ch,\qquad z_j=bu+(a+b)v+dh.
 \label{eq:quaternion-cocycles}
\end{equation}
The coefficients $c$ and $d$ can be changed freely by $gw-w$, but $a$ and $b$
can't be. Hence $C/A\cong\F_2^2$, and we can represent it with logical coordinates $(a,b)$.

\subsection{Distance Bounds}\label{sec:dbound}
We prove the two distance bounds by the same argument. A nontrivial
logical operator must have a support containing a closed qubit--check
walk whose lift to the sphere ends at a different copy of its starting
point. We will show that such a path must pass through at least
$\pi M/16$ distinct qubits.

First record the geometric estimates. Rescale the cross-polytope to
radius one and project its boundary radially onto the unit sphere.
Each small tetrahedron has Euclidean diameter at most $2/M$ before
projection. A point $v$ on a facet satisfies
$1=\sum_r|v_r|\le2\|v\|_2$, so its norm is at least $1/2$.
Radial projection therefore expands path lengths within a facet by
at most two. Represent a qubit by the projected barycenter of its
tetrahedron, an $X$ check by its projected vertex, and a $Z$ check by the projected
midpoint of its edge. Connecting these points inside each incident
tetrahedron and then projecting gives a path of length at most $4/M$
for every qubit--check incidence. On the other hand, for every $p\in S^3$
and $g\in Q_8\setminus\{1\}$, the spherical distance from $p$ to $gp$
is at least $\pi/2$.

These estimates also justify lifting the Tanner graphs. For $M\ge3$,
the spherical diameter bound $4/M<\pi/2$ prevents two vertices of one
tetrahedron from being identified by the quaternion action. Thus no
incidences at a qubit, vertex check, or edge check are merged by the
quotient: the upstairs Tanner graphs are graph covers of the downstairs
ones.\footnote{For small $M$, distinct quotient cells can have the same
vertex set. They are retained as distinct cells when forming these
incidence graphs.} In particular, choosing an upstairs copy of a graph
node determines a unique lift of any path starting there.

Now let $P$ be a nontrivial $X$-type or $Z$-type logical operator.
Form its support graph by retaining the supported qubits, all adjacent
checks of the opposite Pauli type, and all incidences between them.
Commutation with the checks means that every check node has even
degree. Each connected component satisfies these parity conditions
separately, since any check touching two supported qubits connects
them within the same component.

Suppose every closed walk in a component returns to its starting copy
when lifted upstairs. Choose one upstairs copy of one node and extend
that choice along paths. The closed-walk assumption makes the result
independent of the chosen paths, giving a lift with exactly one copy
of each node and edge in the component. The lifted support still meets
each check in an even number of qubits, so its Pauli operator commutes
with all upstairs stabilizers. Because the sphere code encodes no
logical qubits, this operator is itself a stabilizer. Projecting supports
downstairs, with multiplicities summed modulo two, maps each upstairs
stabilizer generator to its corresponding downstairs generator.
It therefore maps the lifted stabilizer to a downstairs stabilizer.
Since the lift contains only one copy of each supported qubit, its
projection is exactly the original component.

If every component had this property, $P$ would be a stabilizer,
contrary to its choice. Hence the support graph contains a closed walk
whose lift changes the starting copy, a property called nontrivial
\emph{monodromy}. Removing backtracking and decomposing closed walks
into simple cycles shows that at least one simple cycle has this
property. Let it contain $m$
qubit nodes. They are distinct because the cycle is simple, so
$\wt(P)\ge m$.

The cycle alternates between qubits and checks and therefore has $2m$
incidence edges. Its geometric lift is a path from some $p$ to $gp$ with
$g\ne1$. Combining the lower bound on endpoint separation with the
upper bound on each incidence length gives
\[
 \frac{\pi}{2}
 \le \operatorname{length}(\text{lifted path})
 \le 2m\,\frac4M=\frac{8m}{M}.
\]
Consequently $\wt(P)\ge m\ge\pi M/16$. The argument applies to either
Pauli type, and weights are integers, proving
\[
 d_X,d_Z\ge\left\lceil\frac{\pi M}{16}\right\rceil
\]
as claimed in \cref{thm:quaternion}.

\subsection{Implementing the Phase Gate}\label{sec:ldpcPhaseGate}
We now show that the signed transversal $T$ operation preserves the code and acts as $T \otimes T^\dagger$ on its two logical qubits.
\[
U_\sigma|x\rangle
=e^{i\pi f(x)/4}|x\rangle,
\qquad f(x)=\sum_t\sigma_t x_t.
\]
We must show that this phase is constant within each logical basis state’s superposition, and determine its value for the four logical labels. We start by computing the signed weight in each logical
sector modulo eight without choosing a large physical representative.
Given a potential, send each vertex to the signed color axis
$F(v)=(-1)^{p(v)}e_{c(v)}$ of a target cross-polytope. Extend over
tetrahedra to obtain a map between three-spheres. Its degree counts,
with orientation signs, how many times the domain covers the target.
The cocycle determines the target action
\[
 \rho_z(g)e_c=(-1)^{z_g(g\cdot c)}e_{g\cdot c},
\]
and $F$ respects the two $Q_8$ actions. Each tetrahedron maps to the target
facet selected by its four vertex signs, and its orientation contribution
is $\sigma_t(-1)^{x_t}$. Counting over the sixteen target
facets yields
\[
 \sum_{t\text{ upstairs}}\sigma_t(-1)^{x_t}=16\deg F,
 \qquad \sum_{t\text{ upstairs}}\sigma_t=0.
\]
Since the cover has eight sheets, the signed physical weight downstairs is
\begin{equation}
 f(x):=\sum_{t/Q_8}\sigma_t x_t=-\deg F.
 \label{eq:quaternion-degree}
\end{equation}

For a fixed target action, any two equivariant maps have degrees congruent
modulo eight. The target $S^3$ is connected and has vanishing first and
second homotopy groups, so the maps can be deformed into agreement through
dimension two, choosing the deformation on one cell per free orbit.
Remaining degree differences occur in orbits of eight three-cells.
Changing the cocycle by $gw-w$ conjugates its target action by
$\diag((-1)^{w_c})$, an orientation-preserving map because $w$ has even
weight. Thus $f(x)\bmod8$ depends only on the logical class.

Set $c=d=0$ in \eqref{eq:quaternion-cocycles}. To check the four degrees,
write $I(x)=ix$ for left multiplication on unit quaternions and
$\mathcal C(x)=\bar x$ for conjugation. The following maps intertwine the
domain action with $\rho_z$; this can be checked on $i,j$:
\[
 \begin{array}{c|c|r}
 (a,b)&\text{comparison map}&\text{degree}\\ \hline
 (0,0)&x\mapsto(1,1,1,1)/2&0\\
 (1,0)&x\mapsto(1,-1,-1,-1)/2&0\\
 (0,1)&I&+1\\
 (1,1)&\mathcal C I&-1
 \end{array}
\]
The constant maps land at fixed points; $I$ preserves orientation, while
$\mathcal C$ reverses three quaternion coordinates. Consequently,
\begin{equation}
 \begin{array}{c|rrrr}
 (a,b)&(0,0)&(1,0)&(0,1)&(1,1)\\ \hline
 f(a,b)\pmod8&0&0&-1&1
 \end{array}
 \qquad f(a,b)=a-(a\oplus b)\pmod8.
 \label{eq:quaternion-phases}
\end{equation}
In the CNOT-related coordinates $r=a$, $s=a\oplus b$, this is exactly
$T_rT_s^\dagger$, with no residual controlled phase.

\subsection{Implementing via Uniform physical \texorpdfstring{$T$}{T}}\label{sec:uniformT}
For the uniform gate, let $D$ apply Pauli $X$ at negative-sign sites.
There are $M^3$ such sites. With $\omega=e^{\pi i/4}$,
$XTX=\omega T^\dagger$ gives
\[
 DT^{\otimes n}D=\omega^{M^3}U_\sigma,
 \qquad V'=DV(I\otimes X),\qquad
 T^{\otimes n}V'=\omega^{M^3-1}V'(T\otimes T).
\]
Conjugation changes only $Z$-check eigenvalue signs, preserving locality
and distance.

This supplies the collective interface~\eqref{eq:strong}, but not
individual addressability~\eqref{eq:literal}. For $M\ge7$, every preserving
physical $T$-power layer has either all even or all odd exponents. The three
$X$ stars at a triangle meet in its two adjacent tetrahedra, so their
mod-two overlap condition forces the two exponents to have equal parity.
There are no extra triangle incidences: two distinct quotient edges with
the same endpoints would give a deck loop of length at most $8/M<\pi/2$,
so edges, and then lifted triangles, are determined by their vertices.
Connectivity propagates the parity to every tetrahedron. The logical action
is therefore Clifford or differs from
$T_rT_s^\dagger$ by a Clifford. Neither $T_r$ nor $T_s$ alone is possible
through such a layer. The construction establishes genuine Pauli-LDPC
checks and $d=\Omega(n^{1/3})$ with $k=2$; obtaining extensive logical
storage and addressability requires an additional mechanism.

\section{Sparse commuting checks with protected addressability}\label{sec:protected}

Our third construction starts with a sparse code on eight-level systems
and encodes each physical eight-level system into 28 qubits. Each logical
eight-level system carries three logical qubits. We designate one as
protected and allow a known Clifford operation on the other two whenever
we apply its logical $T$. Keeping all three logical qubits is what lets
us retain the sparse checks of the original code.

To see the gate action, label an eight-level basis by $x\in\Z_8$ and
write $x=a+2b$, with $a\in\{0,1\}$ and $b\in\{0,1,2,3\}$. This
identifies $\ket{x}$ with $\ket a_L\ket b_G$. The clock phase
$Z_8\ket{x}=e^{\pi ix/4}\ket{x}$ then factors as
\begin{equation}
 Z_8\ket{a+2b}=e^{\pi ia/4}e^{\pi ib/2}\ket{a+2b},
 \qquad Z_8=T_L\otimes Z_{4,G}.\label{eq:clock-split}
\end{equation}
Writing $b=b_0+2b_1$ shows that $Z_4=S\otimes Z$ on the two gauge
qubits. Thus an individually addressable logical clock phase gives an
individually addressable protected $T$, with a separate Clifford action
on the gauge register. The factorization holds for every gauge state,
including states entangled with the protected register.

\Needspace{18\baselineskip}
\begin{theorem}[Sparse protected transversal $T$]\label{thm:protected}
Let $R_0$ be a full-row-rank binary $m\times n$ matrix, $0<m<n$, with row
weight at most $w$, column weight at most $q$, and kernel distance $d_0$.
There is a binary commuting-projector code with
\begin{equation}
 N=28(n^2+m^2),\qquad K=(n-m)^2,\qquad d=d_0,\label{eq:protected-parameters}
\end{equation}
encoding $K$ protected qubits and $2K$ gauge qubits. Check support is at
most $28(w+q)$ and participation at most $1+2\max(w,q)$ per physical qubit.
For every $t\in\Z_8^K$, a computable mask $\mu(t)\in\{0,1\}^N$ satisfies
\begin{equation}
 T^{\mu(t)}V=V\left[
     \bigotimes_iT_{L_i}^{t_i}\otimes\bigotimes_iZ_{4,G_i}^{t_i}
     \right],
 \qquad Z_4=\diag(1,i,-1,-i).\label{eq:protected-action}
\end{equation}
Here $T^{\mu(t)}=\bigotimes_jT^{\mu_j(t)}$. Uniform physical $T$ also
implements the target $t=\mathbf1$. The full checked subspace, as well
as the protected subsystem, has exact distance $d_0$.
\end{theorem}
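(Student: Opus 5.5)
The plan is to build the code as a hypergraph product over $\Z_8$ and then encode each eight-level site with the charge cell \eqref{eq:chargecell} at $L=8$. Read $R_0$ as an integer matrix modulo $8$ and form the hypergraph product of $R:=R_0\bmod 8$ with itself, in the form used by~\cite{TZ}. Clock sites sit on $[n]\times[n]$ and $[m]\times[m]$, giving $n^2+m^2$ eight-level systems. $X_8$ checks are indexed by $[m]\times[n]$ and $Z_8$ checks by $[n]\times[m]$, with matrices $A=(R\otimes I\mid -I\otimes R^T)$ and $B=(I\otimes R\mid R^T\otimes I)$, suitably oriented. Then $AB^T=R\otimes R-R\otimes R=0$ over $\Z_8$, so the clock checks commute. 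Each such check has support at most $w+q$, and each clock site lies in at most $\max(w,q)$ checks of each type.

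For freeness I will argue as follows. $R_0$ has full row rank over $\F_2$, so some $m\times m$ minor is odd and hence a unit modulo $8$. As in \cref{sec:clocks}, $\ker R\subseteq\Z_8^n$ is then free of rank $n-m$, $R^T$ is injective, and the cokernel of $R$ is zero. A Künneth-type computation for free split complexes shows that $\ker B/\operatorname{row}(A)\cong\ker R\otimes\ker R$. This quotient is free of rank $(n-m)^2$, with logical representatives $x\otimes y$ supported on the $[n]\times[n]$ block. So there are $K$ logical clocks whose labels range independently over $\Z_8$.

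Next comes the binary encoding. Apply $E_{c_j}$ from \eqref{eq:chargecell} with $L=8$ and $r=3$, which uses $(r+1)(L-1)=28$ qubits per cell. The value $c_j=\sum_i\Lambda_{ij}$ is computed from chosen $Z_8$-exponent representatives $\Lambda_i$ of each logical $Z_8$, exactly as in \cref{sec:clocks}. The commuting projectors are of two kinds. The first is the projector onto the image of $E_{c_j}$ in each cell. The second is each clock check projector conjugated through the cell isometries; since the cells are isometric, these still commute with each other and with the cell projectors. This gives check support at most $28(w+q)$ and participation at most $1+2\max(w,q)$ per qubit.

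For the gate, define $\mu(t)$ by the digit mask for $b_j=\sum_it_i\Lambda_{ij}$ in each cell. This yields $\bigotimes_iZ_{8,i}^{t_i}$, and applying $T$ everywhere yields $t=\mathbf 1$. The clock split \eqref{eq:clock-split} turns $Z_8^{t_i}$ into $T_{L_i}^{t_i}\otimes Z_{4,G_i}^{t_i}$, which proves \eqref{eq:protected-action}.

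Distance is where I expect the main work. The lower bound has two layers. First, a binary error on $w'$ qubits touches at most $w'$ cells, so by the compression argument of \cref{sec:clocks} it suffices to bound the clock distance by $d_0$. For that I would use the standard hypergraph-product bound $\min(d(\ker R),d(\ker R^T))$. Here $\ker R^T=0$, so only $d(\ker R)$ matters, and I must show $d(\ker R)\ge d_0$ over $\Z_8$. The divide-out-powers-of-two argument of \cref{sec:clocks} does this. The delicate point is transferring the hypergraph-product bound itself to $\Z_8$. The naive reduction modulo $2$ of a nontrivial logical could become trivial after dividing by $2^s$. I would handle this by pairing with the dual free logical basis: a $\Z_8$ logical is nontrivial iff its pairing with some $x\otimes y$ representative is nonzero. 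Dividing out the common power of $2$ of the pairings, rather than of the coordinates, should then produce a logical that is nontrivial modulo $2$ on no larger support, where the binary hypergraph-product bound applies.

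For exactness of $d_0$ I need matching upper bounds. The first is for the full checked space: take a minimum-weight word $x\in\ker R_0$, which has weight $d_0$. The operator $Z_8$-type on $x\otimes e$ for suitable $e$ is a logical on $d_0$ clock sites. At the binary level I would realize a nonscalar, code-preserving operator using one qubit per cell of that support, for instance a single-qubit $Z$ in the first unary subcell compressed through $E_c$. The second is for the protected subsystem: I must check that some weight-$d_0$ operator acts nontrivially on the protected qubit $a=x\bmod 2$ rather than only on the gauge qubits. I would choose a logical $X_8$ shift, which flips $a$, and argue that its restriction to the cell images can be mimicked by a weight-$d_0$ binary operator detecting the $a$-parity. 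This matching upper bound at the qubit level is the step I expect to be hardest, and I would first test it on a single cell before committing to the argument.
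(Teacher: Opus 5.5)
The construction is sound, but the distance argument has a genuine gap, and the exactness step picks the wrong witness and the wrong qubit.

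\textbf{What works.} Your code is the paper's complex $d_2(A)=(AR,RA)$, $d_1(B,D)=RB-DR$, up to transposing the matrices. The freeness count, the $28$-qubit cells, the commuting projectors and the masks all go through. One small correction: the logical module is $\operatorname{coker}R^T\otimes\ker R\cong\End_{\Z_8}(H)$. Products $x\otimes y$ with both factors in $\ker R$ need not span it when $\ker R_0$ meets the row space of $R_0$. Only freeness and rank are used, so this is harmless.

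\textbf{The gap: the clock-distance lower bound.} Reduction modulo $2$ cannot import the binary hypergraph-product bound. Take $L=2L'+S$, where $L'$ has an odd logical pairing and $S$ is a stabilizer with odd entries. Then $L$ is a nontrivial $\Z_8$ logical whose logical pairings are all even, while $L\bmod2$ is a binary stabilizer. There is nothing on $\supp L$ to divide by $2$. You would first have to subtract a lift of that stabilizer, and its support over $\Z_8$ is not controlled. Rescaling by $2^{2-s}$ returns you to the same situation. ``Dividing out the power of two of the pairings'' does not define an operator on $\supp L$.

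The missing idea is to skip the binary bound and argue directly over $\Z_8$ with the $\End(H)$ description of logical classes. Write an $X_8$-type logical as $(B,D)$ with $RB=DR$; its class is $B|_H$. If it is nontrivial, choose $v\in H$ with $Bv\neq0$. Then $R(Bv)=D(Rv)=0$, so $Bv$ is a nonzero element of $H$ supported on the nonzero rows of $B$. Your $\Z_8$ kernel bound then forces at least $d_0$ such rows. Equivalently, pair only with the dual representatives $e_i\otimes v$, $v\in H$, so that a nonzero pairing is literally an entry of $Bv$. The $Z_8$ type follows from the duality $(B,D)\mapsto(B^T,-D^T)$, which turns the pairing $\Tr(BB')-\Tr(DD')$ into the dot product.

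\textbf{The exactness step.} A minimum-weight word $c$ is generally not in $\ker R$ over $\Z_8$, since $Rc$ is only even. The witness must be $4c\otimes e_j$, acting as $Z_8^4$ on $d_0$ sites. Because $Z_8^4\ket{a+2b}=(-1)^a\ket{a+2b}$, this one witness is a protected logical $Z$ acting as identity on the gauge. It settles the full space and the protected subsystem at once.

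Your $X$-shift route is unnecessary and does not work as planned, for two reasons. First, the weight-$d_0$ $X_8$-type witness $(4ce_j^T,0)$ shifts the logical clocks only by multiples of $4$, so it never flips $a$. Second, distinct cell images $E_c\ket x$ differ in several qubits. Any operator using one qubit per cell therefore compresses to a diagonal clock operator and cannot realize a shift.

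At the binary level, the qubit must be the first bit of the $4x$ block, which equals $x\bmod 2$. A single $Z$ on qubit $k$ of the first block $u_{[x]_8}$ gives $(-1)^{[x\ge k]}$, which is not a clock phase. Your suggested binary witness is therefore not the logical $Z_8^4$ representative, and you would still have to prove that its compression is nonscalar.
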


The full checked space encodes $3K$ ordinary qubits. Calling $2K$ of them
``gauge'' specifies which logical factor may undergo the accompanying
Clifford operation; it does not introduce a Pauli gauge group. In fact,
every operator $F$ on fewer than $d_0$ physical qubits satisfies
$V^\dagger FV=c_FI_{LG}$, so these errors are detected on the entire
encoded space. The binary check projectors are generally non-Pauli.

For good classical $(3,6)$ LDPC input, choose a maximal independent subset
of the check rows. This preserves the binary kernel and its distance,
and gives $m\le n/2$, $w\le6$, and $q\le3$. Hence $K\ge n^2/4$
and $N\le35n^2$, yielding
\begin{equation}
 \frac KN\ge\frac1{140},\qquad d=\Omega(\sqrt N),\qquad
 \text{check support}\le252,\quad\text{participation}\le13.
 \label{eq:protected-numbers}
\end{equation}
The rate counts only the protected qubits. We prove the theorem by first
constructing the eight-level code and determining its logical systems
and distance, then encoding its physical systems into qubits while
preserving the required gates and check sparsity.

\subsection{Constructing the sparse eight-level code}\label{sec:protected-parent}

Use the same $0,1$ entries of $R_0$ to form a matrix $R$ over $\Z_8$.
Its row and column supports are unchanged; only the arithmetic is now
modulo eight. Since $\Z_8$ is not a field, we must check that the logical
labels are independent coordinates ranging over all eight residues.
A rank count alone would not establish this.

Full row rank of $R_0$ gives an $m\times m$ minor of $R$ with odd
determinant, which is invertible modulo eight. Thus $R$ has a right
inverse $J$, with $RJ=I_m$. Every vector $v$ decomposes uniquely into
$v-JRv\in\ker R$ and $JRv\in J(\Z_8^m)$, giving
\begin{equation}
 \Z_8^n=H\oplus J(\Z_8^m),\qquad H=\ker R\cong\Z_8^{n-m}.
 \label{eq:ring-splitting}
\end{equation}
The last isomorphism can also be seen by solving for the $m$ coordinates
of the invertible minor: the other $n-m$ coordinates are arbitrary
elements of $\Z_8$.

We use $n^2+m^2$ physical eight-level systems, with their computational
labels arranged as matrices $(B,D)$ of sizes $n\times n$ and $m\times m$.
On each system let $X_8\ket{x}=\ket{x+1\bmod8}$ and
$Z_8\ket{x}=e^{\pi ix/4}\ket{x}$. The checks are specified by two
linear maps:
\begin{equation}
 \operatorname{Mat}_{n\times m}(\Z_8)
 \xrightarrow{\ d_2\ }\operatorname{Mat}_{n\times n}(\Z_8)
       \oplus\operatorname{Mat}_{m\times m}(\Z_8)
 \xrightarrow{\ d_1\ }\operatorname{Mat}_{m\times n}(\Z_8),
 \label{eq:ring-complex}
\end{equation}
where $d_2(A)=(AR,RA)$ and $d_1(B,D)=RB-DR$. For each single-entry
matrix $A$, the entries of $d_2(A)$ specify the exponents of an $X_8$
check. Each coordinate of $d_1(B,D)$ specifies a $Z_8$ check. The
checks commute because
$d_1d_2(A)=RAR-RAR=0$ modulo eight. Each check involves at most one
row and one column of $R$, hence at most $w+q$ physical systems.
Each physical system occurs in at most $\max(w,q)$ checks of either
type.

We next determine the encoded basis. A label $(B,D)$ satisfies the
$Z_8$ checks exactly when $RB=DR$. The $X_8$ checks shift labels by
elements of $\operatorname{im}d_2$. Consequently the encoded basis
consists of uniform sums over cosets in
$\mathcal C/\mathcal A$, where
$\mathcal C=\ker d_1$ and $\mathcal A=\operatorname{im}d_2$.
This is the same coset description as for binary CSS codes, with
addition modulo eight.

There is a concrete way to identify these cosets. If $RB=DR$ and
$v\in H$, then $R(Bv)=D(Rv)=0$, so $B$ maps $H$ into itself.
Adding $(AR,RA)$ does not change this map, since $ARv=0$ on $H$.
Conversely, every $\Z_8$-linear map $F:H\to H$ occurs: take
$Bv=F(v-JRv)$ and $D=0$. Finally, if a pair satisfying $RB=DR$
acts as zero on $H$, then
$B=BJR$ and $D=RBJ$, so $(B,D)=d_2(BJ)$ and its coset is trivial.
We have therefore proved
\[
 \mathcal C/\mathcal A\cong\End_{\Z_8}(H)
 \cong\operatorname{Mat}_{(n-m)\times(n-m)}(\Z_8).
\]
Here $\End_{\Z_8}(H)$ means all linear maps from $H$ to itself.
Each of the $(n-m)^2$ matrix entries is an independent eight-valued
logical label. Thus the code encodes exactly $K=(n-m)^2$ full
eight-level systems.

\subsection{Distance of the eight-level code}\label{sec:protected-clock-distance}

We first relate $H=\ker R$ to the original binary code. Its minimum
nonzero support is exactly $d_0$. For the lower bound, write a nonzero
$h\in H$ as $h=2^s u$, taking $2^s$ to be the largest power of two
dividing all its coordinates. Then $s\in\{0,1,2\}$ and at least one
coordinate of $u$ is odd. The equation $Rh=0\pmod8$ implies
$R_0(u\bmod2)=0$. This gives a nonzero binary codeword supported
within $\supp h$, so $\wt(h)\ge d_0$. For the reverse inequality,
if $c$ is a minimum-weight binary codeword, then $4c\in H$ and
$\wt(4c)=d_0$. We will also use that every binary codeword has a lift
in $H$: regarding $c$ as a $0,1$ vector, $c-JRc$ lies in $H$ and
reduces to $c$ modulo two because $Rc$ is even.

An $X_8$-type operator commutes with the $Z_8$ checks precisely when
its exponent pair $(B,D)$ belongs to $\mathcal C$. Suppose its total
weight, counting nonzero entries in both matrices, is less than $d_0$.
Then $B$ has fewer than $d_0$ nonzero rows. For any $v\in H$, the
vector $Bv$ belongs to $H$ and is supported on those rows, so the
kernel-distance bound forces $Bv=0$. The preceding identification of
logical cosets now shows that $(B,D)\in\mathcal A$: the operator is
an $X_8$ stabilizer. This proves the $X$-distance lower bound.

To obtain the $Z$ bound, we identify its exponent classes with the
$X$ classes by transposing the matrices. Define
\begin{equation}
 \langle(B,D),(B',D')\rangle=\operatorname{Tr}(BB')-
       \operatorname{Tr}(DD').\label{eq:ring-pairing}
\end{equation}
This pairing is nondegenerate over $\Z_8$, and
\[
 \langle d_2(A),(B,D)\rangle
 =\operatorname{Tr}\bigl(A(RB-DR)\bigr).
\]
Thus a pair is orthogonal to every $X_8$ stabilizer in this pairing
exactly when it lies in $\mathcal C$. The map
$\tau(B,D)=(B^T,-D^T)$ converts the displayed pairing into the
ordinary coordinate dot product. Writing $\perp$ for orthogonality
under that dot product, we obtain
$\tau(\mathcal C)=\mathcal A^\perp$ and
$\tau(\mathcal A)=\mathcal C^\perp$. These are respectively the
$Z_8$ operators commuting with all $X_8$ checks and the $Z_8$
stabilizers. Hence $\tau$ identifies the two logical-operator
quotients and preserves weight, proving the same $Z$-distance bound.

The bound is attained. Choose a binary word $c$ of weight $d_0$ in
$\ker R_0$ and a coordinate $j$ with $c_j=1$. The pair
$(4ce_j^T,0)$ lies in $\mathcal C$ and has weight $d_0$.
Choose a lift $v\in H$ of $c$; then $v_j$ is odd, so
$4ce_j^Tv=4c\ne0$. The pair therefore acts nontrivially on $H$
and represents a nontrivial logical operator of $X_8$ type. Applying
$\tau$ gives one of $Z_8$ type with the same weight. Both distances are exactly
$d_0$. Products of $X_8$ and $Z_8$ powers span all operators on the
physical eight-level systems, so the code detects arbitrary errors
supported on fewer than $d_0$ such systems.

\subsection{Encoding into qubits and implementing the gates}\label{sec:protected-binary-gates}

We now replace each physical eight-level system by 28 qubits. The local
encoding must allow a chosen $Z_8$ power to be implemented using only
physical $I$ and $T$. It must also allow the all-$T$ layer to implement
a separately prescribed power. Both requirements are met by the
$L=8$ cell of \cref{sec:clocks}, which we describe explicitly here.

Let $u_s=1^s0^{7-s}$ for $0\le s\le7$, and write $[y]_8$ for the
residue in $\{0,\ldots,7\}$. For $c\in\Z_8$, encode
\begin{equation}
 E_c\ket{x}=\ket{u_{[x]_8}}\ket{u_{[2x]_8}}
              \ket{u_{[4x]_8}}\ket{u_{[(c-7)x]_8}}.
 \label{eq:cell28}
\end{equation}
These are four blocks of seven qubits. The first block distinguishes
all eight input labels, so $E_c$ is an isometry. Applying $T$ to all
seven qubits of a block multiplies its state by $e^{\pi is/4}$ when
its label is $u_s$. Thus the first three blocks implement phase
coefficients $1,2,4$. To realize $Z_8^b$, write
$b=b_0+2b_1+4b_2$ and apply $T$ to the corresponding selected blocks,
leaving the fourth untouched. Applying $T$ to all four blocks instead
gives coefficient $1+2+4+(c-7)=c$ modulo eight, and hence realizes
$Z_8^c$.

Choose a basis of the $K$ logical eight-level systems. For each $i$,
let $\Lambda_i$ be a physical exponent vector such that
$\bigotimes_jZ_{8,j}^{\Lambda_{ij}}$ implements its logical $Z_8$.
To compute these representatives, choose a basis of $H$ and extend its
coordinate functionals to $\Z_8^n$ using \eqref{eq:ring-splitting}.
Each matrix entry of $B|_H$ is then a linear combination of physical
entries of $B$; its coefficients give the corresponding $\Lambda_i$.
Fix the encoding of physical system $j$ by choosing
$c_j=\sum_i\Lambda_{ij}\pmod8$. For a logical target
$t\in\Z_8^K$, compute
$b_j(t)=\sum_i t_i\Lambda_{ij}\pmod8$ and use the first three
blocks of cell $j$ to implement $Z_8^{b_j(t)}$. The resulting physical
$I/T$ mask implements exactly the requested logical clock phases.
Applying $T$ to every physical qubit instead implements the sum of
the logical phase representatives, which is the collective target.
Splitting each logical clock as in \eqref{eq:clock-split} proves the
protected and gauge actions in \eqref{eq:protected-action}.

The representatives $\Lambda_i$ can have large support. Even then,
the selected physical gates form a single transversal layer. Check
sparsity is determined by $R$ and the local encoding, independently
of the supports of these gate masks.

\subsection{Sparse binary checks and exact distance}\label{sec:protected-binary-checks}

Two types of checks define the binary code. First, include the projector
$P_j=E_{c_j}E_{c_j}^\dagger$ onto the eight-dimensional valid subspace
of each 28-qubit cell. Second, let $\Pi_h$ be the $+1$ eigenspace
projector of an outer clock check, supported on a set $A_h$ of
physical clocks. Replace it by
\[
 Q_h=E_{A_h}\Pi_hE_{A_h}^\dagger,
 \qquad E_{A_h}=\bigotimes_{j\in A_h}E_{c_j},
\]
tensored with identity outside those cells. Each $Q_h$ acts on at
most $28(w+q)$ qubits. A physical qubit belongs to one cell-image
check and at most $2\max(w,q)$ outer checks, giving the claimed
participation bound.

These projectors commute on the entire binary Hilbert space, including
states outside the valid cell subspaces. Each $Q_h$ commutes with
every $P_j$: if $j\in A_h$, then $P_jQ_h=Q_hP_j=Q_h$, and
otherwise their supports are disjoint. We can therefore examine the
sectors in which each cell is valid or invalid separately. If a cell
in $A_h\cup A_{h'}$ is invalid, at least one of $Q_h,Q_{h'}$
annihilates that sector, and both product orders vanish. If all cells
in the union are valid, the two products are the encoded versions of
$\Pi_h\Pi_{h'}$ and $\Pi_{h'}\Pi_h$, which agree. Their common
$+1$ space, together with the $P_j$ checks, is exactly the encoded
outer code. In particular, the construction has retained all $K$
logical eight-level systems.

For the distance lower bound, an operator on fewer than $d_0$ binary
qubits touches fewer than $d_0$ cells. Compressing it through the cell
isometries gives an operator on at most those outer clocks, which
the outer code detects. This argument also covers errors that take
a cell outside its valid subspace: compression computes the matrix
elements between code states without assuming that the error preserves
the cell. Consequently $V^\dagger FV=c_FI_{LG}$ for every such
physical operator $F$, proving distance at least $d_0$ for both the
full subspace and the protected subsystem.

To prove equality, use the logical phase operator obtained by applying
$\tau$ to $(4ce_j^T,0)$. It has exactly $d_0$ nonzero physical
coefficients, all equal to four. Its nontrivial logical action has
order two, so in the free logical phase basis it is a nonidentity
product of $Z_{8,i}^4$. Under $x=a+2b$, these factors act as
$(-1)^a$: they are protected logical $Z$ operators and act as
identity on the gauge. In the binary cell, the first bit of the
$4x$ block is precisely $x\bmod2$, since $[4x]_8$ is zero or four.
A single physical $Z$ on that bit therefore implements $Z_8^4$.
The outer witness consequently has a binary representative of weight
$d_0$ acting nontrivially on the protected register. This proves both
exact distance claims in \cref{thm:protected}.

The construction retains sparse commuting checks because it keeps the
full logical eight-level systems. Restricting each one to two logical
states, as in \cref{sec:clocks}, would require additional logical
constraints whose support need not be bounded. Here those extra states
instead form the gauge register, with the specified Clifford evolution.
The binary projectors are generally non-Pauli, so this is a construction
with sparse commuting checks rather than a Pauli-LDPC realization.

\Needspace{7\baselineskip}
\section{Future Work}\label{sec:future}

The main open intersection is an asymptotically good Pauli-LDPC family with
fully addressable transversal non-Clifford one-qubit gates. Even constant
rate with sublinear growing distance remains unachieved here for the
fully addressable $T$ interface and ordinary Pauli checks.

Several intermediate questions are concrete. Can a Pauli-LDPC family carry
all logical $T$ phases in one common bounded-participation set of physical
$T$, controlled-$S$, and CCZ terms? Expressing their phases through parity
ancillas using \eqref{eq:controlled} suggests a route to transversal $T$ powers;
the challenge is to
keep one common sparse presentation for all logical targets. Can the sparse protected
construction reach linear distance, or select a useful fixed logical sector
while retaining sparse checks? Finally, its quantum decoding and check
measurement costs remain to be analyzed.

\subsection*{Generative AI Statement}
The basic idea of Sections 3 and 5 were conceived by the author. These were developed and validated with the assistance of ChatGPT (a mix of 5.6-Sol and 6-Astra, across several sessions). The process of discussion and refinement led to ChatGPT proposing the construction of Section 4 as an additional attractive result. Several attempts were made with AI to achieve more of the desiderata given in Section 6 simultaneously, including several other routes the author was optimistic on, but without luck. These were completed by September 5th. The initial draft of writing was done by the author but significantly rewritten and replaced by AI generated text through the editing process. The author takes responsibility for the correctness of claims and proofs and has vetted all text herein.

\appendix
\section{Quantitative comparison for addressable \texorpdfstring{$T$}{T}}
\label{app:parameters}

Compare the full literal interface \eqref{eq:literal}--\eqref{eq:strong},
writing $R=\liminf k/n$ and $\delta=\liminf d/n$.
For San-Jos\'e's hierarchy-lowering route~\cite[Propositions~3.2 and~4.5,
Theorem~6.4, Remark~6.8]{SanJose}, take $h=4$, $m=18$, and $\tau=128$.
The generic bounds give $s\le106761$ and $B=257/65408$.
Repetition applied directly to the weighted code uses
$r_j\in\{7,9,11,13\}$ copies, matching each odd coefficient modulo eight
and exposing every $T$ power and the uniform target.
The bounds $R\ge\kappa/(13s)$ and $\delta\ge(B-7\kappa)/(13s)$,
at $\kappa=4B/29$, together with \eqref{eq:constants}, certify
$(R,\delta)\ge(4\eta,\eta)$ coordinatewise, with
\[
 \eta_{\rm here}=\frac1{195118}\simeq5.13\times10^{-6},\qquad
 \eta_{\rm SJ}=\frac{257/65408}{29\cdot13\cdot106761}
       \simeq9.76\times10^{-11}.
\]
Thus both displayed guarantees (relative distance, rate) improve by about
$5.2\times10^4$ at the same rate-to-distance ratio. These are lower bounds
and not meaningfully optimized, but are meant to give only a crude indication
of the different effective parameters of each construction. Without a doubt,
both can be greatly improved in practice.

\end{document}